\newtheorem{theorem}{Theorem}
\newtheorem{lemma}{Lemma}
\newcommand{\ie}{\emph{i.e., }}
\newcommand{\eg}{\emph{e.g., }}
\newcommand{\wrt}{\emph{w.r.t. }}
\def\eqref#1{equation~\ref{#1}}
\def\1{\bm{1}}
\DeclareMathAlphabet{\mathsfit}{\encodingdefault}{\sfdefault}{m}{sl}
\SetMathAlphabet{\mathsfit}{bold}{\encodingdefault}{\sfdefault}{bx}{n}
\DeclareMathOperator*{\argmax}{arg\,max}
\DeclareMathOperator*{\argmin}{arg\,min}
\title{A Taxation Perspective for Fair Re-ranking}
\author{Chen Xu}
\affiliation{%
  \institution{\mbox{Gaoling School of Artificial Intelligence}\\Renmin University of China}
    \country{xc\_chen@ruc.edu.cn}
}
\author{Xiaopeng Ye}
\affiliation{%
  \institution{Gaoling School of Artificial Intelligence\\Renmin University of China}
    \country{xpye@ruc.edu.cn}
}
\author{Wenjie Wang$^*$}
\affiliation{%
  \institution{NExT++ Research Center\\National University of Singapore}
   \country{wenjiewang96@gmail.com}
}
\author{Liang Pang}
\affiliation{%
  \institution{Institute of Computing Technology Chinese Academy of Sciences}
   \country{pangliang@ict.ac.cn}
}
\author{Jun Xu}
\affiliation{%
    \institution{\mbox{Gaoling School of Artificial Intelligence}\\Renmin University of China}
    \country{junxu@ruc.edu.cn}
}
\author{Tat-Seng Chua}
\affiliation{%
  \institution{NExT++ Research Center\\National University of Singapore}
   \country{dcscts@nus.edu.sg}
}
\begin{document}

\renewcommand{\shortauthors}{Chen Xu, Xiaopeng Ye, Wenjie Wang, Liang Pang, Jun Xu, \& Tat-Seng Chua}

\begin{abstract}






Fair re-ranking aims to redistribute ranking slots among items more equitably to ensure responsibility and ethics. The exploration of redistribution problems has a long history in economics, offering valuable insights for conceptualizing fair re-ranking as a taxation process. Such a formulation provides us with a fresh perspective to re-examine fair re-ranking and inspire the development of new methods.
From a taxation perspective, we theoretically demonstrate that most previous fair re-ranking methods can be reformulated as an item-level tax policy.
Ideally, a good tax policy should be effective and conveniently controllable to adjust ranking resources.  However, both empirical and theoretical analyses indicate that the previous item-level tax policy cannot meet two ideal controllable requirements: (1) continuity, ensuring minor changes in tax rates result in small accuracy and fairness shifts; (2) controllability over accuracy loss, ensuring precise estimation of the accuracy loss under a specific tax rate.
To overcome these challenges, we introduce a new fair re-ranking method named Tax-rank, which levies taxes based on the difference in utility between two items.
Then, we efficiently optimize such an objective by utilizing the Sinkhorn algorithm in optimal transport. Upon a comprehensive analysis,
Our model Tax-rank offers a superior tax policy for fair re-ranking, theoretically demonstrating both continuity and controllability over accuracy loss.
Experimental results show that Tax-rank outperforms all state-of-the-art baselines in terms of effectiveness and efficiency on recommendation and advertising tasks.

\end{abstract}

\ccsdesc[500]{Information systems~Information retrieval}

\keywords{Re-ranking, Item Fairness, Taxation Process}

\maketitle

\section{Introduction}

\begin{figure}[t]  
    \centering    
    \subfigure[Taxation process V.S. Fair re-ranking]
    {
        \includegraphics[width=0.95\linewidth]{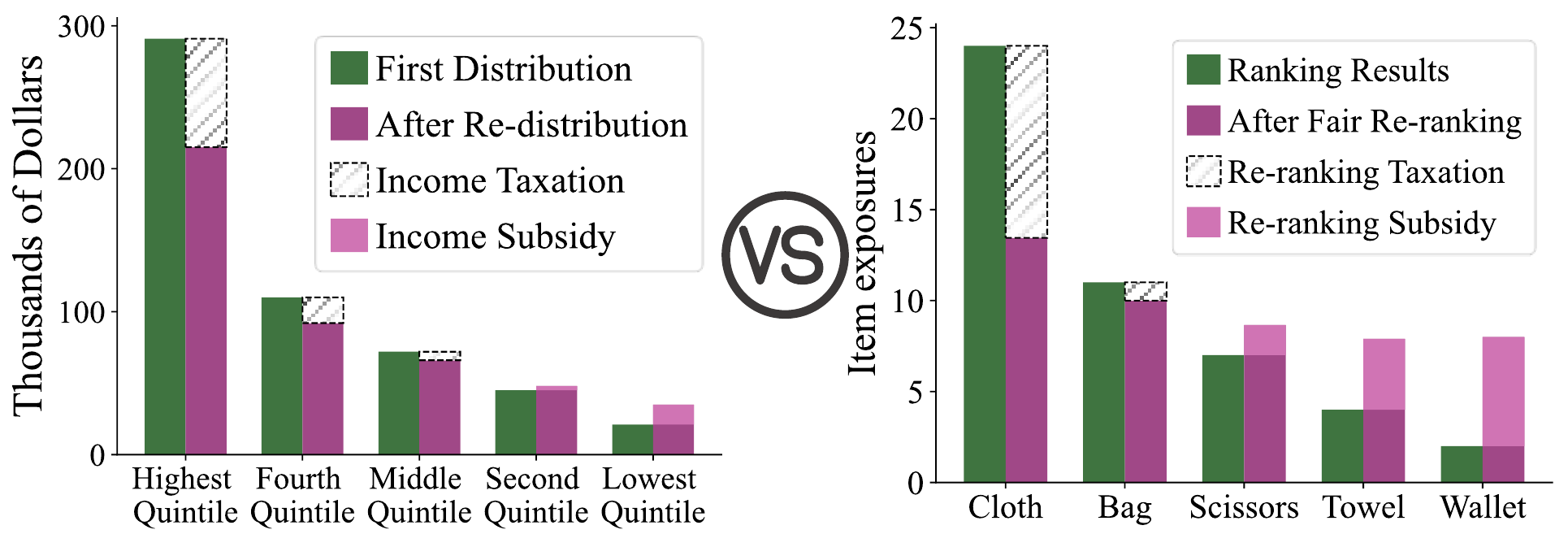}
    }
     \subfigure[Previous taxation process V.S. Our taxation process]
    {
        \includegraphics[width=0.95\linewidth]{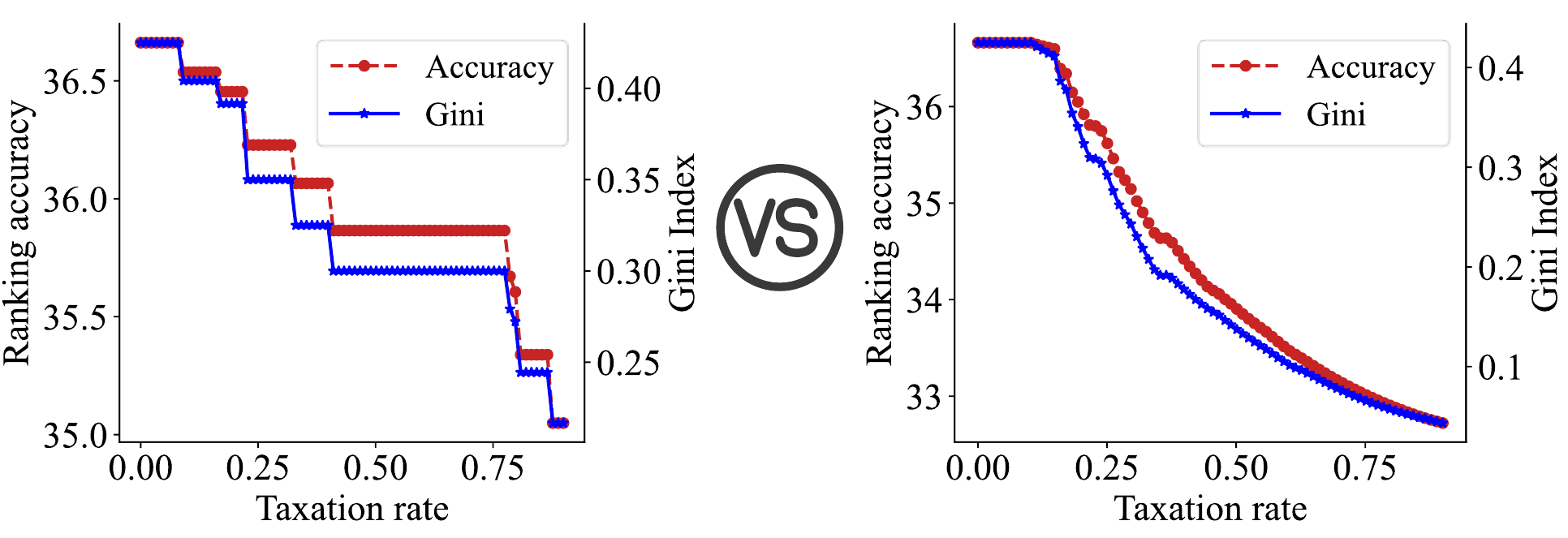}
    }
    \caption{Sub-figure (a) illustrates that fair re-ranking can be viewed as a taxation process from an economic perspective\protect\footnotemark.  Sub-figure (b) illustrates that our taxation process exhibits better continuity compared to previous ones.}
    \label{fig:intro}
\end{figure}

Ranking tasks such as recommendation and advertising is important in personalized information filtering 
for users~\cite{IRbook}. However, recent studies~\cite{fairrec, deldjoo2022survey} have revealed that prioritizing ranking accuracy will lead to significant unfair exposures of items, threatening the ethics and stability of ranking systems. Recently, many fair re-ranking methods~\cite{fairrec, deldjoo2022survey} have been developed to redistribute ranking slots among items more equitably by transforming the problem into a resource redistribution problem~\cite{xu2023p}.

The investigation of resource redistribution has a long history in economics~\cite{hanlon2010review, nerre2001concept}, offering economic insights for examining the task of fair re-ranking. 
For example, as shown in the left part of Figure~\ref{fig:intro} (a), federal taxes are structured to redistribute subsidies to the poor while aiming to minimize the sacrifice of social welfare.
Similarly, as shown in the right part of Figure~\ref{fig:intro} (a), the goal of fair re-ranking is to redistribute items, aiming for a more equitable distribution of ranking slots, without significantly compromising ranking accuracy. Since both taxing and fair re-ranking are essentially redistributing resources, we can examine fair re-ranking from the view of taxation. Intuitively, fair re-ranking can be viewed as taxing exposure from ``rich'' items (with more exposure) and redirects it to ``poor'' items (have less exposure) in Figure~\ref{fig:intro} (a).

Viewing fair re-ranking as a taxation process provides a fresh perspective to re-examine previous fair re-ranking methods and inspire new approaches.
From a taxation perspective, we can theoretically show that most previous fair re-ranking methods~\cite{wu2021tfrom, fairrec, xu2023p, Do22GeneralizedGini, cpfair} can be reformulated as an item-level tax policy, which imposes an additional tax on each item (see Section~\ref{sec:previous}).
In economics, a good tax policy should be effective and conveniently controllable to adjust ranking resources~\cite{hanlon2010review, nerre2001concept}. However, we observe that the item-level tax policy derived from previous fair re-ranking methods fails to meet these two ideal controllability criteria for taxation: (1) Continuity, implying that slight variations in tax rates lead to minor shifts in performances. As shown in the left part of Figure~\ref{fig:intro} (b),  When tax rates vary, previous methods exhibit numerous breakpoints in fairness and accuracy performance, indicating poor continuity. (2) Controllability over accuracy loss, ensuring an accurate estimation of accuracy loss caused by a specific tax rate. However, previous methods are uncertain about such accuracy loss (refer to Section~\ref{sec:re-examine} for a more detailed theoretical analysis). These two requirements are important for a tax policy since continuity ensures the stability of the tax rate, while controllability over accuracy loss ensures better management of the trade-off between fairness and accuracy.

\footnotetext{Left part is federal taxes of 2016 in the U.S., as reported by the Congressional Budget Office \url{https://www.cbo.gov}.}

To meet the two aforementioned essential criteria for the taxation process for fair re-ranking, in this paper, we introduce a novel fair ranking method, termed Tax-rank. Differing from the item-level tax policy, Tax-rank introduces a distinct fair re-ranking optimization objective, which levies taxes based on the difference in utility (\eg exposure) between two items. 
A detailed geometric explanation of such a taxation process is in Section~\ref{sec:geo}. Upon a comprehensive analysis, our taxation process presents two key advantages: (1) Tax-rank demonstrates greater effectiveness as it adheres to continuity \wrt tax rates, as depicted in the right part of Figure~\ref{fig:intro} (b). (2) Tax-rank offers enhanced controllability over accuracy loss, as we can provide an upper bound, showcasing the maximum accuracy loss across different tax rates. More detailed theoretical analyses are in Section~\ref{sec:two_requires}.

Meanwhile, we propose an effective algorithm to implement the taxation process of Tax-rank efficiently. We first present an easily solvable lower-bound function for the optimization objective of Tax-rank. However, the solution of the lower-bound function does not satisfy the ranking constraint, which requires fixed-sized items (i.e. Top-K ranking) for each user. To solve this issue, we utilize the Sinkhorn algorithm~\cite{swanson2020rationalizing} of optimal transport~\cite{pham2020unbalanced, peyre2019computational} to project non-compliant solutions onto solutions that satisfy the ranking constraint.


We summarize the major contributions of this paper as follows:

(1) We re-conceptualize the fair re-ranking task as a taxation process and re-examine the previous fair re-ranking methods from a novel taxation perspective.

(2) We introduce a novel fair re-ranking method named Tax-rank, incorporating a new optimization objective based on the taxation process and utilizing the Sinkhorn algorithm for efficient optimization.
Theoretical evidence indicates that Tax-rank exhibits superior continuity and systematic controllability.

(3) We conduct extensive experiments on two publicly available recommendation and advertising datasets, demonstrating that Tax-rank outperforms state-of-the-art baselines in terms of both fair ranking performance and efficiency.

\section{Related Work}
Recently, fair re-ranking tasks have become a compelling and pressing issue, driven by the need for a responsible and trustworthy ecosystem~\cite{lifairness, lipani2016fairness, deldjoo2022survey, patro2022fair}. Fairness concept in re-ranking varies for multi-stakeholders settings~\cite{abdollahpouri2020multistakeholder}, such as user-oriented fairness~\cite{abdollahpouri2019unfairness, li2021user} and item-oriented fairness~\cite{fairrec, xu2023p, Wang22makefair, singh2019policy}. The concept of user fairness suggests that ranking models should avoid delivering significantly disparate outcomes to users based on their sensitive attributes~\cite{matsumoto2016culture, tyler1995social,li2021user}. On the other hand, item fairness is more related to distributive justice~\cite{tyler2002procedural, lamont2017distributive}, which requires an equitable reallocation of ranking results for items within a healthy ecosystem. In this paper, our primary focus is on applying an economic perspective to reconsider item-fair re-ranking tasks.

Regarding item fair re-ranking, previous work can be divided into two types: one is regularized methods, which used a multi-task optimization approach with a linear combination of accuracy and fairness loss functions, incorporating a trade-off coefficient $\lambda$~\cite{xu2023p, Do22GeneralizedGini, cpfair}. Another approach employed constraint-based methods, formulating the task as a constrained optimization problem to ensure that fairness metrics do not exceed a specified threshold~~\cite{wu2021tfrom, fairrecplus, zafar2019fairness, fairrec}. For measuring fairness, both of the methods utilize different fairness metrics: ~\citet{ben2018game, fairrec, fairrecplus} proposed the Shapley value to optimize fairness; ~\citet{do2022optimizing} proposed to optimize Gini Index~\cite{do2022optimizing},  ~\citet{xu2023p, nips21welf} proposed to optimize max-min fairness function and ~\citet{jiang2021generalized} suggested to use the variance (distance) of different item utilities to measure the item fairness. Despite achieving significant performances, existing methods show poor continuity and systematic controllability under a taxation perspective for fair re-ranking.

In the economic field, the resource is usually allocated through first distribution and re-distribution process~\cite{lambert1992distribution}. In the process of redistribution, taxation is frequently employed as a mechanism to redistribute wealth and tackle income inequality~\cite{hanlon2010review, nerre2001concept}. There are usually two types of taxation methods: 
(1) flat tax rates, such as property tax~\cite{oates1969effects}, which are designed with different fixed rates based on the varying amounts of property;
(2) progressive tax, such as income taxes~\cite{poulson2008state} and payroll taxes~\cite{brittain1971incidence}, typically involve progressive tax rates that increase as a taxpayer's earnings increase. Process tax often is regarded as a more useful but complex taxation method.
Previous tax policies resemble a flat tax, whereas Tax-rank adopts a closer resemblance to a progressive tax. Finally, the Tax-rank policy is more closely related to $\alpha$-fair optimization in cooperative games~\cite{bertsimas2011price, bertsimas2012efficiency}. However, these approaches are not suitable for fair re-ranking tasks.

\section{Problem Formulation}\label{sec:formulation}



We first define some notations for the problem. For vector $\bm{x}\in\mathbb{R}^n$, let $\bm{x}_i$ denote the $i$-th element of the vector. For vector $\bm{x}\in\mathbb{R}^{n\times m}$, let $\bm{x}_{i,j}$ denote the element of $i$-th row and $j$-th column. 


In this section, we will first formulate the ranking task. Let 
$\mathcal{U}$ representing the set of users, and $\mathcal{I}$ representing the set of items.
When a user $u \in \mathcal{U}$ arrives, 
the ranking system will give the user $u$ a ranking list $L_K(u)\in\mathcal{I}^K$, which has fixed-size $K$ items.

Then, we define the item utility in the ranking task. Let $\bm{v}_i$ be the utility of item $i$. In the context of ranking, $\bm{v}_i$ is typically defined as the accumulated utilities of item $i$ across all ranking lists when $|\mathcal{U}|$ users arrive (\eg $\bm{v}_i$ could be total exposure or click numbers of item $i$ within a day). Formally, $\bm{v}_i$ is defined as $ \bm{v}_i = \sum_{u\in\mathcal{U}} w_{u,i}\bm{x}_{u,i}$, where $\bm{x}_{u,i}=1$ denotes item $i$ is in the ranking list of user $u$ (\ie $i\in L_K(u)$), otherwise, $\bm{x}_{u,i}=0$. If
$\bm{x}_{u,i}=1$, the item $i$ will receive an utility  
$w_{u,i}$. Typically, $w_{u,i}$ has two possible definitions: (1) item exposure~\cite{xu2023p, fairrec}, which implies that if an item is exposed to a user once, it will gain one unit of utility (\ie $w_{u,i}=1$). (2) item click~\cite{yang2019bid, liu2021neural}, which implies that if an item is clicked by a user, it will gain one unit of utility. Since we do not know whether a user will click an exposed item, we utilize the estimated probability of a user clicking on an item (\ie click-through-rate (CTR) value) as the value of $w_{u,i}\in [0,1]$.

Finally, the goal of fair re-ranking $f$ is to balance the various utilities of items more equitably. 
That is, on the one hand, $f$ aims to maximize the weighted sum of their utilities ($\sum_{i\in\mathcal{I}} \gamma_i\bm{v}_i$), where $\gamma_i$ is the item weight or bidding value~\cite{yang2019bid}. On the other hand, $f$ aims to achieve as even utilities ($\bm{v}_i \approx \bm{v}_j, \forall i,j\in\mathcal{I}$) as possible.

\section{Taxation Perspective for Fair Re-ranking}
In this section, we utilize the taxation perspective to re-think the fair re-ranking problem.


\subsection{Fair Re-ranking as Tax Policy}\label{sec:previous}

Firstly, the objectives of the previous fair re-ranking can be mainly divided into either regularization-based optimization~\cite{xu2023p, Do22GeneralizedGini, cpfair, fairrec} or constraint-based optimization~\cite{wu2021tfrom, fairrec, fairrecplus, zafar2019fairness}. Specifically, the objectives of regularization-based and constraint-based are formulated in Equation~(\ref{eq:lambda}) and Equation~(\ref{eq:constraint}), respectively:

  \begin{equation}\label{eq:lambda}
        \begin{aligned}
             W_1(\bm{x}) = \max_{\bm{x}\in\mathcal{X}} \quad& g(\bm{v}; \lambda_1) = \sum_{i\in\mathcal{I}} \gamma_i\sum_{u\in\mathcal{U}} w_{u,i}\bm{x}_{u,i} + \lambda r(\bm{v})
        \end{aligned},
    \end{equation}

    \begin{equation}\label{eq:constraint}
        \begin{aligned}
             W_2(\bm{x}) = \max_{\bm{x}\in \mathcal{X}} \quad& g(\bm{v}; \pi) = \sum_{i\in\mathcal{I}} \gamma_i\sum_{u\in\mathcal{U}} w_{u,i}\bm{x}_{u,i}\\
        \textrm{s.t.}\quad 
        &r(\bm{v}) \leq \pi
        \end{aligned},
    \end{equation}
where $\bm{v}_i = \sum_{u\in\mathcal{U}} w_{u,i}\bm{x}_{u,i}$, $\mathcal{X}=\{\bm{x}|\bm{x}_{u,i}=0/1, \sum_{i\in\mathcal{I}}\bm{x}_{u,i}=K\}$ is the feasible region of variable $\bm{x}$, $\lambda_1\in[0,\infty]$ is a trade-off coefficient, and $\pi$ is the pre-defined fairness threshold. The function $r(\bm{v})$ is the fairness function, which aims to measure the disparity of different items' utilities. Note that in previous literature, $r(\bm{v})$ has many forms: $\min_{i\in \mathcal{I}} \bm{v}_i$~\cite{xu2023p, fairrec, fairrecplus}, $-\sum_i |v_i-\bar{v}|$~\cite{cpfair, Jiakai} and any other forms~\cite{lifairness}. 

Then, in Theorem~\ref{theo:previous_tax}, we will illustrate that both of the objectives can be interpreted as a taxation process, which imposes an additional tax rate $\lambda$ and a taxation value $\bm{\mu}_i$ on each item $i$.

\begin{theorem}\label{theo:previous_tax}
Then optimal fair re-ranking result $\bm{x}(\lambda)$ with specific tax rate $\lambda$ can be achieved as:
    \begin{equation}\label{eq:optimial_value}
        \bm{x}^*(\lambda)= \argmax_{\bm{x}\in \mathcal{X}} \sum_{u\in\mathcal{U}}\sum_{i\in\mathcal{I}} s_{u,i}\bm{x}_{u,i},
    \end{equation}
    where  
    $s_{u,i} = \gamma_i w_{u,i} + \bm{\mu}_i,$ and
    \[\bm{\mu} = \argmin_{\bm{\mu}} \left[\sum_{u\in\mathcal{U}}\sum_{k=1}^K s_{u,[k]} +  \max_{\bm{v}} \left(\lambda r(\bm{v})-\bm{\mu}^{\top}\bm{v}\right) \right].\]
    
    In simpler terms, the ranking score is like the original score $\gamma_i w_{u,i}$ but with the addition of item-level taxation $\bm{\mu}_i$ with the tax rate of $\lambda\in[0,\infty]$. The tax rate $\lambda$ is calculated as
    \[
        \lambda =
        \begin{cases}
          \lambda_1, & \text{if } W_1(\bm{x}) \\
          \argmin_{\lambda_2} \left[\sum_{u\in\mathcal{U}}\sum_{k=1}^K c_{u,[k]} +  r(\bm{v})\lambda_2-\lambda_2\pi \right], & \text{if } W_2(\bm{x}) \\
        \end{cases},
    \]
    where $c_{u,i} = \gamma_i w_{u,i}$.

\end{theorem}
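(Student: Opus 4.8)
The plan is to treat $W_1$ and $W_2$ as maximizing a concave function of the utility vector $\bm{v}$ over the discrete feasible set $\mathcal{X}$, and to ``linearize'' the nonlinear dependence on $\bm{v}$ via Fenchel conjugacy, which is exactly what produces a taxed ranking score. Throughout I would assume $r$ is concave, proper and upper semicontinuous, covering the instances cited in the paper (\eg $r(\bm{v})=\min_i\bm{v}_i$ and $r(\bm{v})=-\sum_i|\bm{v}_i-\bar{\bm{v}}|$). The first fact I would record is that for any coefficient vector $(s_{u,i})$ the linear program $\max_{\bm{x}\in\mathcal{X}}\sum_{u,i}s_{u,i}\bm{x}_{u,i}$ is solved by the greedy top-$K$ rule, so its value is $\sum_{u\in\mathcal{U}}\sum_{k=1}^{K}s_{u,[k]}$, and that $\mathrm{conv}(\mathcal{X})=\{\bm{x}:0\le\bm{x}_{u,i}\le 1,\ \sum_i\bm{x}_{u,i}=K\}$ is an integral polytope, so relaxing $\mathcal{X}$ to its convex hull changes nothing.

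For $W_1$, I would apply Fenchel--Moreau to the concave term, $\lambda r(\bm{v})=\min_{\bm{\mu}}\bigl(\bm{\mu}^{\top}\bm{v}+G(\bm{\mu})\bigr)$ with $G(\bm{\mu}):=\max_{\bm{v}'}\bigl(\lambda r(\bm{v}')-\bm{\mu}^{\top}\bm{v}'\bigr)$. Substituting into $g(\bm{v};\lambda)=\sum_i\gamma_i\bm{v}_i+\lambda r(\bm{v})$ and collecting the coefficient multiplying each $\bm{x}_{u,i}$ turns the objective into $\sum_{u,i}s_{u,i}\bm{x}_{u,i}+G(\bm{\mu})$ with $s_{u,i}=\gamma_i w_{u,i}+\bm{\mu}_i$, so that
\[
  W_1\;=\;\max_{\bm{x}\in\mathrm{conv}(\mathcal{X})}\ \min_{\bm{\mu}}\ \Bigl(\sum_{u,i}s_{u,i}\bm{x}_{u,i}+G(\bm{\mu})\Bigr).
\]
The key step is to exchange the $\max$ and the $\min$: the bracketed expression is affine in $\bm{x}$ over the compact convex set $\mathrm{conv}(\mathcal{X})$ and convex and lower semicontinuous in $\bm{\mu}$ (a supremum of affine functions plus an affine term), so Sion's minimax theorem applies and yields $W_1=\min_{\bm{\mu}}\bigl(V(\bm{\mu})+G(\bm{\mu})\bigr)$, where $V(\bm{\mu})=\sum_{u}\sum_{k=1}^{K}s_{u,[k]}$ by the greedy rule. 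This is precisely the defining program for $\bm{\mu}$ in the statement; choosing $\bm{x}^*(\lambda)$ to be a greedy top-$K$ ranking under the score $s_{u,i}(\bm{\mu}^*)$ then gives a saddle point, and the standard fact that a saddle-point component solves the original problem delivers Equation~(\ref{eq:optimial_value}). Equivalently $\bm{\mu}^*$ is a supergradient of $\lambda r$ at the optimal $\bm{v}^*$, which is the ``per-item tax'' reading.

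For $W_2$, I would dualize the fairness constraint with a multiplier $\lambda_2\ge 0$, forming the dual function $d(\lambda_2)=\max_{\bm{x}\in\mathcal{X}}\bigl(\sum_i\gamma_i\bm{v}_i+\lambda_2 r(\bm{v})\bigr)-\lambda_2\pi$. Under Slater's condition (a strictly feasible ranking exists, a mild assumption on $\pi$ I would state explicitly), strong duality holds, so $W_2=d(\lambda_2^*)$ with $\lambda_2^*=\argmin_{\lambda_2\ge0}d(\lambda_2)$; unwinding $d$ with the $W_1$ identity produces the displayed formula for $\lambda$ in the $W_2$ case. Since the inner maximization in $d(\lambda_2^*)$ is literally a $W_1$ instance with trade-off $\lambda=\lambda_2^*$, the first-case analysis applies verbatim, and we recover the uniform reading: the adjusted score equals the raw score $\gamma_i w_{u,i}$ plus an item-level tax $\bm{\mu}_i$ at rate $\lambda$.

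The main obstacle is the $\max$--$\min$ exchange: a priori $\bm{x}$ lives in the nonconvex set $\mathcal{X}$, so minimax is not directly available. The resolution is the integrality of $\mathrm{conv}(\mathcal{X})$ noted at the outset, which lets one pass to the polytope without changing the optimum before invoking Sion's theorem. A secondary point needing care is non-smoothness of $r$: $\bm{\mu}^*$ should be identified through the superdifferential of $\lambda r$ at $\bm{v}^*$ rather than a gradient, and ties in the greedy top-$K$ rule must be broken consistently when reading off $\bm{x}^*(\lambda)$.
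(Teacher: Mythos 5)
The proposal is correct and takes essentially the same route as the paper's own proof: it dualizes the fairness constraint with a multiplier $\lambda_2$ to reduce $W_2$ to the regularized form, and then linearizes $\lambda r(\bm{v})$ through its conjugate (the paper's Lagrangian decomposition following Balseiro/Xu), yielding taxed per-item scores, the greedy top-$K$ value $\sum_{u}\sum_{k=1}^{K}s_{u,[k]}$, and the dual minimization over $\bm{\mu}$. Your extra ingredients (concavity/usc of $r$, integrality of $\mathrm{conv}(\mathcal{X})$ plus Sion's theorem for the max--min swap, Slater for strong duality) merely make explicit what the paper asserts by citation, and the one algebraic looseness---collecting coefficients under $\bm{v}_i=\sum_{u}w_{u,i}\bm{x}_{u,i}$ actually gives $s_{u,i}=(\gamma_i+\bm{\mu}_i)w_{u,i}$ rather than $\gamma_i w_{u,i}+\bm{\mu}_i$ unless the tax couples to raw exposure $\sum_u \bm{x}_{u,i}$---is inherited from the paper's own statement rather than introduced by you.
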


The proof of the Theorem~\ref{theo:previous_tax} can be seen in Appendix~\ref{app:previous_tax}. From Theorem~\ref{theo:previous_tax}, we can see that previous fair re-ranking methods can be regarded as an item-level tax policy with the taxation value as $\bm{\mu}_i$ for ranking score $s_{u,i}$. Meanwhile, we find that the tax policy is a well-studied knapsack problem~\cite{salkin1975knapsack}, where we can only choose the top K items with the highest scores $s_{u,i}$.

\subsection{Re-examine Fair Re-ranking}\label{sec:re-examine}
Through examination, we theoretically demonstrate that previous item-level tax policies of fair re-ranking lack continuity and controllability over accuracy loss.

\subsubsection{Non-continuity.} In Theorem~\ref{theo:non-continuty}, we will establish the non Lipschitz continuous~\cite{hager1979lipschitz} of the optimization objective for the existing taxation \wrt the tax rate $\lambda$:

\begin{theorem}\label{theo:non-continuty}
    When $r(v)$ is not continuous, $W_1(\bm{x},\lambda)$ and $W_2(\bm{x},\lambda)$ is non Lipschitz continuous wrt the tax rate $\lambda$. Formally, $\exists \varepsilon >0, \forall \delta>0$, when $|\lambda_0-\lambda|<\delta$,  $|W_*(\bm{x},\lambda)-W_*(\bm{x},\lambda_0)| \ge \varepsilon$, where $*=1,2$.
\end{theorem}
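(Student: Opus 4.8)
The plan is to exhibit a concrete discontinuity in the optimal value by exploiting the fact that the feasible region $\mathcal{X}$ is discrete (each $\bm{x}_{u,i}\in\{0,1\}$), so the optimal ranking $\bm{x}^*(\lambda)$ — and hence the induced utility vector $\bm{v}(\lambda)$ — can only take finitely many values as $\lambda$ ranges over $[0,\infty]$. First I would recall from Theorem~\ref{theo:previous_tax} that both $W_1$ and $W_2$ reduce to choosing, per user, the top-$K$ items by score $s_{u,i}=\gamma_i w_{u,i}+\bm{\mu}_i$, and that $\bm{\mu}$ depends on $\lambda$ through the inner $\max_{\bm{v}}(\lambda r(\bm{v})-\bm{\mu}^\top\bm{v})$. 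The key structural observation is that the map $\lambda\mapsto\bm{v}(\lambda)$ is piecewise constant: there is a finite partition of the $\lambda$-axis into intervals on each of which $\bm{x}^*(\lambda)$ is a fixed integer point of $\mathcal{X}$. At the breakpoints separating two such intervals the optimal utility vector jumps from some $\bm{v}'$ to a distinct $\bm{v}''$.

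Next I would use the hypothesis that $r(\cdot)$ is \emph{not} continuous. The objective $W_1(\bm{x},\lambda)=\sum_i\gamma_i\bm{v}_i+\lambda r(\bm{v})$ evaluated at the optimizer equals $\sum_i\gamma_i\bm{v}_i(\lambda)+\lambda\, r(\bm{v}(\lambda))$. Since $\bm{v}(\lambda)$ is piecewise constant and $r$ is discontinuous (and need not even be continuous when restricted to the reachable finite set, but more to the point $r$ jumps as $\bm{v}$ jumps across a breakpoint), at a breakpoint $\lambda_0$ the left and right limits of $r(\bm{v}(\lambda))$ differ by a fixed amount, while $\sum_i\gamma_i\bm{v}_i(\lambda)$ also jumps by a fixed amount. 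Concretely, pick a breakpoint $\lambda_0$ where $\bm{v}(\lambda)=\bm{v}'$ for $\lambda<\lambda_0$ close to $\lambda_0$ and $\bm{v}(\lambda)=\bm{v}''\neq\bm{v}'$ for $\lambda>\lambda_0$; then
\[
\lim_{\lambda\downarrow\lambda_0}W_1(\bm{x}^*(\lambda),\lambda)-\lim_{\lambda\uparrow\lambda_0}W_1(\bm{x}^*(\lambda),\lambda)
=\sum_i\gamma_i(\bm{v}''_i-\bm{v}'_i)+\lambda_0\big(r(\bm{v}'')-r(\bm{v}')\big),
\]
and I would argue this quantity is nonzero — if it vanished for a given $\lambda_0$, one could perturb $\lambda_0$ slightly (the breakpoint value itself moves continuously only in the weights, but here the weights are fixed, so instead I would argue that generically across the finitely many breakpoints at least one has a nonzero jump, or invoke optimality: at a true breakpoint both pieces are optimal only at $\lambda_0$ exactly, which forces the two affine-in-$\lambda$ expressions $\sum\gamma_i\bm{v}'_i+\lambda r(\bm{v}')$ and $\sum\gamma_i\bm{v}''_i+\lambda r(\bm{v}'')$ to cross transversally, hence differ in slope $r(\bm{v}')\neq r(\bm{v}'')$, so their common value has a kink but — wait, value is continuous there). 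This is the subtlety I flag below; the cleaner route is that $W_*$ as written is the objective \emph{value at a candidate $\bm{x}$}, not the optimal value, and the theorem statement quantifies over the discontinuity of $r$ directly: choosing $\bm{v}$ near a jump point of $r$, two ranking configurations $\bm{x},\bm{x}_0$ with $\|\bm{v}(\bm{x})-\bm{v}(\bm{x}_0)\|$ arbitrarily small but $|r(\bm{v}(\bm{x}))-r(\bm{v}(\bm{x}_0))|\ge c>0$ give $|W_*(\bm{x},\lambda)-W_*(\bm{x}_0,\lambda)|=|\lambda||r(\bm{v}(\bm{x}))-r(\bm{v}(\bm{x}_0))|+o(1)\ge \varepsilon$ for any fixed $\lambda>0$, establishing the claimed $\exists\varepsilon>0,\forall\delta>0$ statement with $\lambda$ playing the role of the point and a nearby $\lambda_0$ — actually re-reading, the quantifier is over $\lambda$ near $\lambda_0$, so I would instead fix $\bm{x}$ and let $\lambda\to\lambda_0$: no, $W_*$ depends on $\lambda$ affinely and continuously for fixed $\bm{x}$. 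So the intended reading must be the optimal-value map, and the breakpoint argument above is the right one.

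Therefore the main step I would carry out carefully is: (i) show the optimal-value function $\lambda\mapsto W_*^{\mathrm{opt}}(\lambda):=\max_{\bm{x}\in\mathcal{X}}g(\bm{v};\cdot)$ is an upper envelope of finitely many affine functions $\lambda\mapsto\sum_i\gamma_i\bm{v}_i+\lambda r(\bm{v})$ indexed by the finitely many reachable $\bm{v}$, hence continuous in $\lambda$ — which would \emph{contradict} a naive reading; so (ii) the correct interpretation, which I will adopt, is that the non-continuity is with respect to the induced solution: one tracks $W_*(\bm{x}^*(\lambda),\lambda_0)$, i.e. the true objective evaluated at the \emph{previous} optimizer, against $W_*(\bm{x}^*(\lambda_0),\lambda_0)$, and shows the accuracy/fairness \emph{components} jump. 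Concretely I would define the discontinuity in terms of the pair (accuracy$=\sum_i\gamma_i\bm{v}_i$, fairness$=r(\bm{v})$) as a function of $\lambda$, show each is piecewise constant with a genuine jump at some breakpoint because $r$ is discontinuous (so infinitesimally close tax rates select utility vectors on opposite sides of a jump of $r$, forcing $|r(\bm{v}(\lambda))-r(\bm{v}(\lambda_0))|\ge\varepsilon$), and note this jump in the regularizer drags a jump into $W_*$ via the $\lambda r(\bm{v})$ term for $\lambda$ bounded away from $0$. The hard part will be pinning down exactly which "$W_*(\bm{x},\lambda)$" the theorem means and guaranteeing that at least one breakpoint has a strictly positive jump rather than an accidental cancellation; I would handle the latter by choosing $r$ to be one of the explicitly discontinuous forms the paper mentions (e.g. a thresholded/indicator-type fairness function, or $\min_i\bm{v}_i$ composed with the discrete $\bm{v}$) and exhibiting a two-item, two-slot toy instance where the jump is computed by hand to be a fixed positive constant independent of $\delta$.
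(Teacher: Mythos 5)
Your structural reading matches the paper's own argument. The proof in Appendix~\ref{app:non-continuty} likewise reduces $W_2$ to $W_1$ via Theorem~\ref{theo:previous_tax} and then argues that, because $\bm{x}_{u,i}$ is binary, the optimizer can only change by swapping an item $i_r$ out of some user's list for an item $i_k$, and that when such a swap occurs the change $\delta r(\bm{v})$ in the fairness term is bounded below by some $M>0$ because $r$ is discontinuous (e.g.\ the max--min form). That is exactly your breakpoint / piecewise-constant-argmax picture, so in spirit you take the same route as the paper.

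Two remarks. First, your observation that the literal optimal-value map $\lambda\mapsto\max_{\bm{x}\in\mathcal{X}}\bigl\{\sum_i\gamma_i\bm{v}_i+\lambda r(\bm{v})\bigr\}$ is the upper envelope of finitely many functions affine in $\lambda$, hence continuous (indeed piecewise linear and convex), is correct and is sharper than anything in the appendix: read as a statement about the optimal value, the displayed claim cannot hold, and the paper never confronts this. What its proof actually tracks (and what Figure~\ref{fig:intro}(b) depicts) is the argmax and the induced accuracy and fairness components, which are piecewise constant in $\lambda$ and jump at breakpoints; your reinterpretation (ii) is therefore the reading the paper implicitly adopts. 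Second, however, your proposal stops short of a proof: you defer both the identification of which object is discontinuous and the non-cancellation issue --- that some breakpoint carries a genuinely nonzero jump of $r(\bm{v}(\lambda))$ --- to a two-item toy instance that you never construct. A complete argument needs either that explicit instance or the lower-bound step the paper asserts, namely that whenever the solution swaps an item the fairness value changes by at least some fixed $M>0$ (which does hold for the discontinuous $r$ the paper cites, since the reachable exposure vectors form a finite set and $r$ then takes finitely many values). As written, your submission is a correct plan whose decisive step --- exhibiting the strictly positive jump --- is left unexecuted; carrying out the toy construction you sketch would make it at least as rigorous as the paper's own (quite informal) proof, and your envelope observation additionally pinpoints an imprecision in how the theorem itself is stated.
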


The detailed proof of Theorem~\ref{theo:non-continuty} can be seen in Appendix~\ref{app:non-continuty}. Intuitively, the binary variable $\bm{x}_{u,i}$ under the item-level policy and the non-differentiable property of some $r(\bm{v})$ lead to the non-continuity.
The Theorem~\ref{theo:non-continuty} tells us a slight change in tax rates $\lambda$ leads to significant shifts in fair re-ranking performances.

\subsubsection{Non-controllability over accuracy loss.} we will establish the definition of the price of taxation (POT)~\cite{bertsimas2011price} as the maximum ranking accuracy loss across various taxation levels, formulated as the ratio of ranking accuracy loss to the maximum accuracy:
\begin{equation}
    \textbf{POT}(\lambda) = \frac{\textbf{Acc}'(0)-\textbf{Acc}'(\lambda)}{\textbf{Acc}'(0)},
\end{equation}
where $\textbf{Acc}'(\lambda)$ denotes the ranking accuracy (here, accuracy refers to the overall user-item ranking scores that these items are exposed to by users) under existing tax rate $\lambda$, \ie
\[
    \textbf{Acc}'(\lambda) = \sum_{u\in\mathcal{U}}\sum_{i\in\mathcal{I}} \gamma_iw_{u,i}\bm{x}^*(\lambda)_{u,i},
\]
and $\bm{x}^*(\lambda)$ is the optimal result with specific tax rate $\lambda$. Upon examining Equation~(\ref{eq:optimial_value}), it becomes evident that the relationship between the ranking result $\bm{x}^*(\lambda)$ and the tax rate is non-linear and non-continuous (refer to Theorem~\ref{theo:non-continuty} and the non-linear of the fairness function $r(\bm{v})$~\cite{xu2023p, fairrec}). Simultaneously, it is noteworthy that Equation~(\ref{eq:optimial_value}) represents a well-studied integral knapsack problem~\cite{salkin1975knapsack}, and obtaining an upper bound for such a non-linear and non-continuous knapsack problem remains a huge challenge~\cite{cacchiani2022knapsack}. Therefore, we are uncertain about the extent to which a specific tax rate will incur accuracy losses.

\subsection{Insights from Taxation Perspective}
Through the lens of taxation, the fair re-ranking mechanism operates as a dynamic adjustment tool, adapting to various systematic objectives during both periods of economic prosperity and downturn. 
During periods of prosperity in the ranking system, there is typically a surge in user traffic, and providers are willing to offer a greater variety of items. However, the increased competition during such periods may result in unfairness, as it can lead to a long-tail effect and the monopolization of specific items~\cite{abdollahpouri2017controlling}.  During such periods, strategically increasing the tax rate can enhance fairness, fostering a healthier ecosystem~\cite{xu2023p}. 
On the other hand, during economic downturns in the ranking system, user traffic tends to decrease, and providers may exhibit reduced enthusiasm for participation. In such a period, appropriately reducing tax rates can incentivize provider competition, attracting more users to the system and ensuring economic vitality. Therefore, a good fair re-ranking method not only needs to be effective but also provide a conveniently controllable way to select an appropriate hyper-parameter (\eg tax rate) for regulating fairness degree.


\section{Tax-rank}
To overcome the previous tax policy issues, we propose an improved fair re-ranking method named Tax-rank.


\subsection{Optimization Objective}\label{sec:Tax-rank}
In this section, we will first introduce the proposed taxation optimization objective of Tax-rank. Drawing inspiration from the welfare function~\cite{bertsimas2012efficiency}, we levy taxes based on the difference in utility between two items: tax the higher utility of item $i$ and redirect to the item $j$ (\ie $v_i>v_j$) with the value as $\frac{\gamma_j}{\gamma_i}(\frac{\bm{v}_i}{\bm{v}_j})^t$. Then the overall optimization objective will be

\begin{equation}\label{eq:Tax-rank}
    \begin{aligned}
         \bm{x}^*(t) = \argmax_{\bm{x}\in\mathcal{X}_s} &f(\bm{x}; t) =\begin{cases} 
      \sum_i \gamma_i \bm{v}_i^{1-t}/(1-t) & \text{if } t \ge 0, t \neq 1 \\
      \sum_i \gamma_i\log(\bm{v}_i)& \text{if } t=1 \\
    \end{cases}\\
    \textrm{s.t.}\quad  
    &\bm{v}_i = \sum_{u\in\mathcal{U}} w_{u,i}\bm{x}_{u,i}, \quad \forall i\in \mathcal{I}\\
    \end{aligned},
\end{equation}
where $\mathcal{X}_s = \{\bm{x}|\mathbf{x}_{u,i}\in [0,1],\sum_{i\in\mathcal{I}}\mathbf{x}_{u,i} = K, \forall u\in \mathcal{U}\}$ is the feasible region of variable $\bm{x}$. Intuitively, to keep the social welfare function 
$f(\bm{x}; t)$ unchanged, for every additional unit of utility gained by the ``poor'' item $j$, the ``rich'' item $i$ has to pay the tax value of $\frac{\gamma_j}{\gamma_i}(\frac{\bm{v}_i}{\bm{v}_j})^t$ with their weight $\gamma_i$ and $\gamma_j$.

Note that, Tax-rank introduces a relaxation of the binary solution for $\bm{x}$, transitioning it into a continuous version. Here, $\bm{x}_t$ can be interpreted as the recommendation probability assigned to each item, allowing us to leverage a multi-nominal sample technique to generate ranking lists. $t\in [0,\infty)$ represents the tax rate, where a higher value results in a more fair results. 

When $t=0$, the optimization objective reduces to the accuracy-first objective; When tax rate $t=1$, the objective can be reduced to Nash solution~\cite{nash1950bargaining}, where the tax rate will make the item utilities proportional to its weight, \ie $\bm{v}_i:\bm{v}_j = \gamma_i : \gamma_j, \forall i,j\in\mathcal{I}$. When $t$ approaches $\infty$, the optimization objective reduces to max-min form $\min_{i\in\mathcal{I}} \bm{v}_i$, leading to even utility for every item.


\begin{figure}[t]  
    \centering    
    \includegraphics[width=0.9\linewidth]{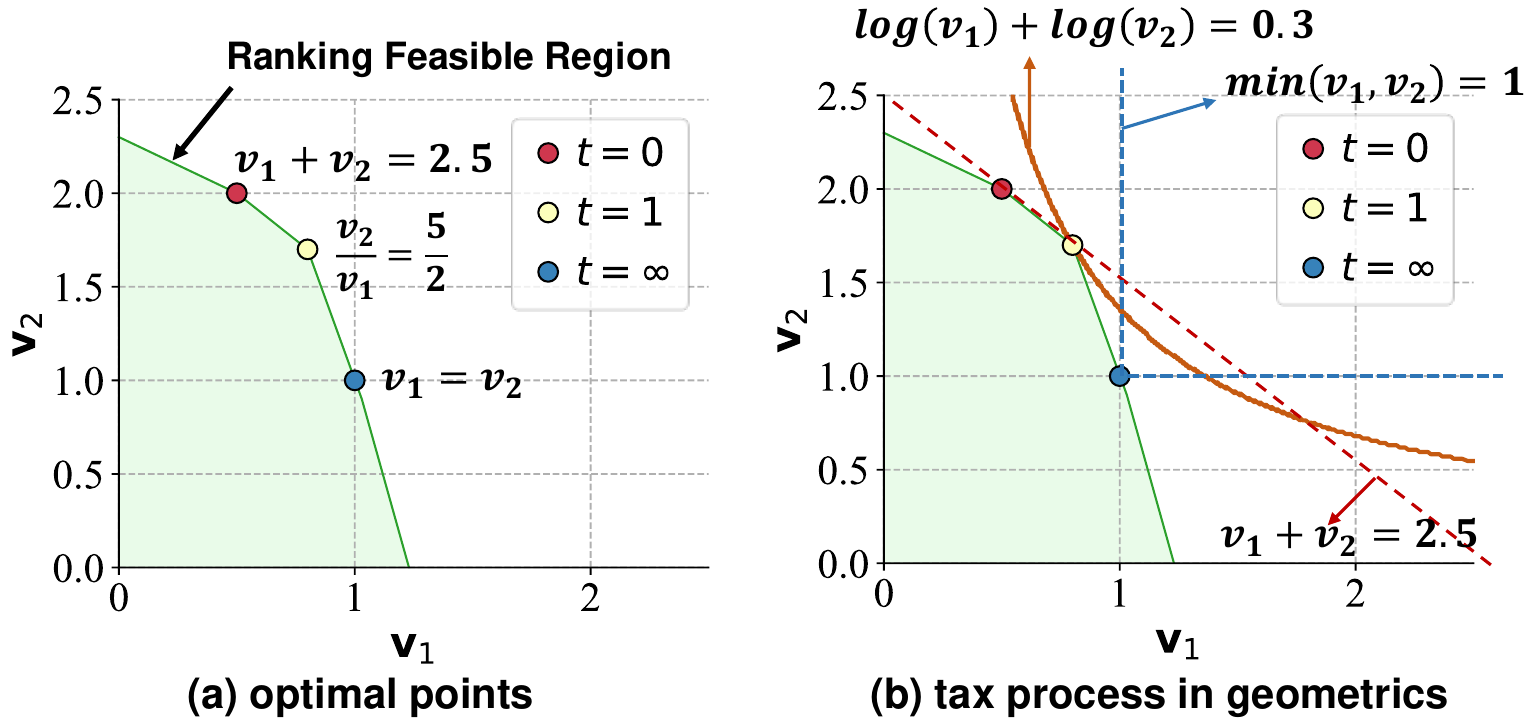}
    \caption{Geometric explanation for our taxation process, which imposes taxes based on the disparity in utility between two items. }
    \label{fig:example_taxation}
\end{figure}



\subsection{Optimization Objective Analysis}\label{sec:Tax-rank}
In this section, we will give a formal analysis of our optimization objective in a geometric view and its controllable requirements.

\subsubsection{Geometric explanation for our taxation process}\label{sec:geo}

To gain a better understanding of Tax-rank, we visualize the geometric to show our taxation process for taxing the difference in utility between two items.
In Figure~\ref{fig:example_taxation}, the ranking system only has two items and ranking size $K=1$.  The x-axis $\bm{v}_1$ and y-axis $\bm{v}_2$ represent the expected click numbers of items 1 and 2 (see Section~\ref{sec:formulation}). We set their weight $\gamma_1=\gamma_2=1$.
The green area describes the feasible region of all the possible values of $\bm{v}_1$ and $\bm{v}_2$ under ranking constraint.
The circle points are optimal solutions under different tax rates $t$. The line in Figure~\ref{fig:example_taxation} (b) represents the tangent between the optimization objective and the feasible region.

In Figure~\ref{fig:example_taxation} (a), 
we observe that when $t=0$ (red points), prioritizing overall utility $\bm{v}_1+\bm{v}_2=2.5$; at $t=1$ (yellow points), the tax rate will make the item utilities proportional to its weight, \ie $\bm{v}_1:\bm{v}_2=2:5$. At $t=\infty$ (blue points), 
the two items will share the same utility, \ie $\bm{v}_1=\bm{v}_2$. In Figure~\ref{fig:example_taxation} (b), we give a geometric explanation to show the optimal points change \wrt tax rate. From Figure~\ref{fig:example_taxation}, the optimal points must stay in the boundary lines (\ie Pareto frontier) of the ranking feasible region (see detailed proof in ~\cite{bertsimas2011price}). Therefore, the optimal points should be the tangent points between the contours of the optimization objective and the Pareto front. Let's consider the taxation process in two-dimensional space.

For example, in Figure~\ref{fig:example_taxation} (b), when $t=0$, the red point is the tangent point between the red line $\bm{v}_1+\bm{v}_2 = 2.5$ and the Pareto front. When $t = 1$, a tax of $\frac{\bm{v}_2}{\bm{v}_1}$ is applied to from $\bm{v}_2$ to $\bm{v}_1$ along the red line if $\bm{v}_2>\bm{v}_1$, which means the slope $\frac{\partial \bm{v}_2}{\partial \bm{v}_1}$ becomes $\frac{\bm{v}_1}{\bm{v}_2}$. When $\bm{v}_1>\bm{v}_2$, similar operation also make the slope $\frac{\partial \bm{v}_2}{\partial \bm{v}_1}$ becomes $\frac{\bm{v}_1}{\bm{v}_2}$.
Then the taxation process leads to the optimization function of $\log \bm{v}_1 + \log \bm{v}_2$ (orange line). Similarly, $t=\infty$, a tax of $(\frac{v_2}{v_1})^\infty$ is applied at each point on the red line, resulting in a slope $\frac{\partial \bm{v}_2}{\partial \bm{v}_1}$ of each point become $\infty$ and $0$ for each point when $\bm{v}_2>\bm{v}_1$ and $\bm{v}_1>\bm{v}_2$, respectively. Such taxation process leads to the objective becomes $\min (\bm{v}_1, \bm{v}_2)$ (blue line). Without loss of generality, tax the value as $\frac{\gamma_j}{\gamma_i}(\frac{\bm{v}_i}{\bm{v}_j})^t$ for every $\bm{v}_i>\bm{v}_j$ will transform the objective into the Equation~(\ref{eq:Tax-rank}).

In summary, the tax process of imposing taxes based on the disparity in utility between two items will determine the optimization objective of Equation~(\ref{eq:Tax-rank}) by influencing the slope of each point in the geometric perspective. 



\subsubsection{Controllable requirements}\label{sec:two_requires}
we demonstrate the optimization objective meets two ideal
controllable requirements: continuous and controllability over the accuracy loss.

\textbf{Continuity.} Firstly, the Tax-rank policy $f(\bm{x};t)$ adheres to the Lipschitz continuity property~\cite{hager1979lipschitz} of the optimization objective associated with the existing taxation concerning the tax rate $t$. Because we can easily observe that $f(\bm{x};t)$ takes an exponential form \wrt $t$, and their first-order derivatives are also continuous~\cite{lan2010axiomatic}. Meanwhile, transforming the binary solution of previous methods to a continuous solution also ensures feasible region is continuous. 

\textbf{Controllability over accuracy loss.} For the bound the maximum ranking accuracy loss across different taxation degrees under varying tax rates $t$, we will have the following Theorem:

\begin{theorem}\label{theo:POF}
       The price of taxation (POT) of Tax-rank is bounded:
    \begin{equation}\label{eq:POF}
        \text{POT} = \frac{\textbf{Acc}(0)-\textbf{Acc}(t)}{\textbf{Acc}(0)} \leq 1 - O(|\mathcal{U}|^{-\frac{t}{1+t}}),
    \end{equation}
    where $\textbf{Acc}(t)$ denotes the accuracy under Tax-rank tax policy with tax rate $t$, \ie
$
    \textbf{Acc}(t) = \sum_{u\in\mathcal{U}}\sum_{i\in\mathcal{I}} \gamma_iw_{u,i}\bm{x}^*(t)_{u,i},
$
$\bm{x}^*(t)$ is the  optimal fair re-ranking result with specific tax rate $t$ in Equation~(\ref{eq:Tax-rank}).
\end{theorem}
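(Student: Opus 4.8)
Since $\textbf{Acc}(0)>0$, the claimed bound is equivalent to the efficiency-retention inequality $\textbf{Acc}(t)\ge c\,|\mathcal{U}|^{-t/(1+t)}\,\textbf{Acc}(0)$ for a constant $c>0$ depending only on $K$ and the weights $\{\gamma_i\}$, so the plan is to lower-bound the weighted utility of the Tax-rank optimum against that of the accuracy-first optimum $\bm{x}^*(0)$ (which, since $f(\bm{x};0)=\sum_i\gamma_i\bm{v}_i$, is exactly the utilitarian maximizer). First I would record the geometry: because $\mathcal{X}_s$ is a polytope and $\bm{v}$ is a linear image of $\bm{x}$, the feasible utility set $V=\{\bm{v}:\bm{v}_i=\sum_u w_{u,i}\bm{x}_{u,i},\ \bm{x}\in\mathcal{X}_s\}$ is convex and compact; every feasible $\bm{v}$ obeys the mass bound $\sum_i\bm{v}_i\le K|\mathcal{U}|$ and the per-item bound $0\le\bm{v}_i\le|\mathcal{U}|$; and $f(\cdot;t)$ is concave on $\mathbb{R}_{>0}^{|\mathcal{I}|}$ with its maximizer $\bm{v}^t:=\bm{v}^*(t)$ having strictly positive coordinates for $t>0$.

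\textbf{First-order optimality and Hölder.} Writing $\bm{v}^0:=\bm{v}^*(0)$, optimality of $\bm{v}^t$ for the concave objective over the convex set $V$ gives that the directional derivative at $\bm{v}^t$ toward $\bm{v}^0$ is non-positive; since $\partial f/\partial\bm{v}_i=\gamma_i\bm{v}_i^{-t}$, this reads $\sum_i\gamma_i(\bm{v}^t_i)^{-t}\bm{v}^0_i\le\sum_i\gamma_i(\bm{v}^t_i)^{1-t}$. I would then apply Hölder's inequality to $\textbf{Acc}(0)=\sum_i\gamma_i\bm{v}^0_i$, splitting term $i$ as $\bigl(\gamma_i(\bm{v}^t_i)^{-t}\bm{v}^0_i\bigr)^{1/p}\cdot\bigl(\gamma_i^{1/q}(\bm{v}^t_i)^{t/p}(\bm{v}^0_i)^{1/q}\bigr)$ with conjugate exponents $p,q$: the first group is controlled by the optimality inequality just derived, while the second is controlled by the per-item bounds $\bm{v}^0_i,\bm{v}^t_i\le|\mathcal{U}|$ together with $\sum_i\gamma_i\bm{v}^0_i=\textbf{Acc}(0)$ and $\sum_i\gamma_i\bm{v}^t_i=\textbf{Acc}(t)$. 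Choosing $p=1+t$ makes the collected powers of $|\mathcal{U}|$ amount to the exponent $t/(1+t)$. Finally the residual factor $\sum_i\gamma_i(\bm{v}^t_i)^{1-t}$ is bounded: for $0\le t\le1$ by $\bigl(\sum_i\gamma_i\bigr)^{t}\textbf{Acc}(t)^{1-t}$ via weighted Jensen (concavity of $x\mapsto x^{1-t}$), and for $t>1$ via a lower bound on $\min_i\bm{v}^t_i$. Rearranging yields $\textbf{Acc}(0)\le C\,|\mathcal{U}|^{t/(1+t)}\,\textbf{Acc}(t)$, which is the assertion; this is the same line of argument used for price-of-fairness estimates in \cite{bertsimas2011price, bertsimas2012efficiency}.

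\textbf{Main obstacle.} The delicate point is controlling the utility spread at the fair optimum — concretely, obtaining a lower bound on $\min_i\bm{v}^t_i$ (equivalently, bounding the ratios $\bm{v}^0_i/\bm{v}^t_i$) with a constant independent of $|\mathcal{I}|$ and of the CTR weights $\{w_{u,i}\}$, since this is exactly what prevents the exponent from collapsing to a trivial value and what makes the $t>1$ regime go through. I would derive it from the KKT conditions of the Tax-rank program: each item's shadow price $\gamma_i(\bm{v}^t_i)^{-t}$ is balanced against the per-unit cost (in selection budget) of raising $\bm{v}_i$, and the capacity constraints $\bm{v}_i\le|\mathcal{U}|$ and $\sum_i\bm{v}_i\le K|\mathcal{U}|$ cap how large any coordinate can be, hence — through the balancing — how small any coordinate can be; this is where the factor $|\mathcal{U}|$ enters. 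The genuinely delicate sub-case is (near-)degenerate items (those with $\sum_u w_{u,i}$ very small, possible only in the click-weighted setting), where the argument needs the mild standing hypothesis — implicit already in the well-posedness of $f$ for $t\ge1$ — that every item is reachable with non-negligible weight; under it the constant $c$ in the bound is uniform and the proof is complete.
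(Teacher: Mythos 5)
Your core argument is the same one the paper uses: exploit first-order optimality of the fair optimum for the concave objective, then apply H\"older's inequality with the conjugate pair $\bigl(1+t,\tfrac{1+t}{t}\bigr)$ so that the exponent $\tfrac{t}{1+t}$ appears, and absorb a utility-spread factor into the $O(\cdot)$. The paper packages the H\"older step as Lemma~\ref{lemma:1} and runs the argument on the surrogate exposure program of Lemma~\ref{theo:upperbound} (variables $\bm{e}_i\in[0,1]$, $\sum_i\bm{e}_i=K$), so the large parameter its proof actually produces is the number of items ($N=|\mathcal{I}|$ in Equation~(\ref{eq:con3})), with the spread entering as $\lambda(\bm{z}_{|\mathcal{I}|}/\bm{z}_1)^t$ hidden inside the $O(\cdot)$. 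You instead work directly with the relaxed utility set and plug the utilitarian optimum $\bm{v}^0$ into the variational inequality at $\bm{v}^t$, letting the cap $\bm{v}_i\le|\mathcal{U}|$ supply the $|\mathcal{U}|^{t/(1+t)}$ factor; this is cleaner than the paper's knapsack/ordering detour and has the side benefit of matching the $|\mathcal{U}|$ that appears in the theorem statement. Your honestly flagged obstacle --- a lower bound on $\min_i\bm{v}^t_i$ uniform in the instance --- is precisely the quantity the paper also leaves uncontrolled (its ratio $(\bm{z}_{|\mathcal{I}|}/\bm{z}_1)^t$ and the unexplained constant $\lambda\in(0,1)$), so on that point you are no worse off than the published proof.

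One step does not go through as written: the Jensen branch for $0\le t\le 1$. From your H\"older split you get $\textbf{Acc}(0)\le\bigl(\sum_i\gamma_i(\bm{v}^t_i)^{1-t}\bigr)^{1/(1+t)}\bigl(|\mathcal{U}|\,\textbf{Acc}(t)\bigr)^{t/(1+t)}$, and bounding the first factor by $(\sum_i\gamma_i)^{t}\textbf{Acc}(t)^{1-t}$ yields $\textbf{Acc}(0)\le C\,|\mathcal{U}|^{t/(1+t)}\textbf{Acc}(t)^{1/(1+t)}$: the exponent on $\textbf{Acc}(t)$ is $\tfrac{1-t}{1+t}+\tfrac{t}{1+t}=\tfrac{1}{1+t}$, not $1$, so it does not rearrange to the claimed $\textbf{Acc}(0)\le C\,|\mathcal{U}|^{t/(1+t)}\textbf{Acc}(t)$ without an extra assumption such as $\textbf{Acc}(0)=\Theta(|\mathcal{U}|)$. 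The repair is simply to use the same route you propose for $t>1$ throughout: with $\min_i\bm{v}^t_i\ge\sigma$ one has $\sum_i\gamma_i(\bm{v}^t_i)^{1-t}\le\sigma^{-t}\textbf{Acc}(t)$, the chain closes with constant $\sigma^{-t/(1+t)}$, and that constant is exactly the analogue of the spread factor the paper sweeps into its $O(\cdot)$.
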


Detailed proof can be seen in Appendix~\ref{app:POF}. As we can see from Theorem~\ref{theo:POF}: when increasing the tax rate $t$ in a ranking system, there is a bound on the rate $1-O(|\mathcal{U}|^{-\frac{t}{1+t}})$ at which ranking utilities will decrease. The bound offers fair re-ranking systems a controllable way to determine the appropriate tax rate $t$.


\subsection{Algorithm}
The overall algorithm workflow can be seen in Algorithm~\ref{alg:alpha}. We observe that directly optimizing the Equation~(\ref{eq:Tax-rank}) is NP-hard since it is a non-linear, large-scale, and integral programming~\cite{bertsekas1997nonlinear}. Therefore, firstly, we construct an easy-solved standard programming (lines 1-2), which is the lower bound function of Equation~(\ref{eq:Tax-rank}). Then we apply the OT projection to obtain the final ranking result  (lines 3-9).

\subsubsection{Lower Bound Function Construction}\label{sec:bound}
\begin{lemma}\label{theo:upperbound}
    There exists $\tau>0$, s.t. we have
    \begin{equation}\label{eq:upperbound}
       \begin{aligned}
            f(\bm{x}; t)& \ge \widetilde{f}(\bm{x}; t)= \max_{\bm{e}} \sum_{i} \gamma_i \eta_i g(\bm{e};t)\\
             \quad\textrm{s.t.} &\quad \sum_{i\in\mathcal{I}} \bm{e}_i = K, \quad 0 \leq \bm{e}_i \leq 1, \quad \eta_i = \tau \sum_{u\in\mathcal{U}} w_{u,i}, \forall i\in\mathcal{I}\\
             &\quad g(\bm{e};t)=\begin{cases} 
      \sum_i \bm{e}_i^{1-t}/(1-t) & \text{if } t \ge 0, t \neq 1 \\
      \sum_i \log (\bm{e}_i)& \text{if } t=1 \\
   \end{cases},
       \end{aligned}
    \end{equation}
\end{lemma}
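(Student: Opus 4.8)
```latex
\paragraph{Proof proposal.}
The plan is to establish the claimed inequality $f(\bm{x};t) \ge \widetilde{f}(\bm{x};t)$ by constructing an explicit lower bound for each term $\gamma_i \bm{v}_i^{1-t}/(1-t)$ (respectively $\gamma_i\log\bm{v}_i$) in the objective of Equation~(\ref{eq:Tax-rank}), and then showing that the aggregated bound is exactly the value of the maximization program $\widetilde{f}(\bm{x};t)$. The starting observation is that $\bm{v}_i = \sum_{u\in\mathcal{U}} w_{u,i}\bm{x}_{u,i}$ and that, for fixed $\bm{x}\in\mathcal{X}_s$, each $\bm{v}_i$ is a weighted average of the coordinates $\bm{x}_{u,i}\in[0,1]$ with weights $w_{u,i}$. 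Writing $\eta_i = \tau\sum_{u\in\mathcal{U}} w_{u,i}$ as in the statement, I would introduce the auxiliary variable $\bm{e}_i$ that plays the role of a ``per-item aggregated allocation'', and use concavity (for the relevant regime of $t$) of the maps $z\mapsto z^{1-t}/(1-t)$ and $z\mapsto\log z$ to push the nonlinearity through a Jensen-type step. Concretely, since these functions are concave for $t\ge 0$, Jensen's inequality applied to the convex combination defining $\bm{v}_i$ gives a lower bound of the form $\gamma_i\bm{v}_i^{1-t}/(1-t) \ge \gamma_i \sum_u (w_{u,i}/\sum_{u'}w_{u',i})\,(\text{something in }\bm{x}_{u,i})$, which after rescaling by $\tau$ and collecting terms produces the product $\gamma_i\eta_i g(\bm{e};t)$ with $\bm{e}_i$ a suitably normalized aggregate of the $\bm{x}_{u,i}$.

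The second step is to verify the constraints. The feasibility requirement $\sum_{i\in\mathcal{I}}\bm{e}_i = K$ and $0\le\bm{e}_i\le 1$ must be inherited from $\bm{x}\in\mathcal{X}_s$, i.e. from $\sum_{i}\bm{x}_{u,i}=K$ for every $u$ and $\bm{x}_{u,i}\in[0,1]$. Here I expect the natural choice to be $\bm{e}_i = \frac{1}{|\mathcal{U}|}\sum_{u\in\mathcal{U}}\bm{x}_{u,i}$ (an averaged marginal), for which $\sum_i\bm{e}_i = \frac{1}{|\mathcal{U}|}\sum_u\sum_i\bm{x}_{u,i} = \frac{1}{|\mathcal{U}|}\cdot|\mathcal{U}|\cdot K = K$ and $0\le\bm{e}_i\le 1$ follows coordinatewise; the constant $\tau$ (something like $|\mathcal{U}|^{t}$ or a related power, to absorb the averaging in the exponent) is then chosen precisely so that the scaling in the Jensen step closes. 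Because the right-hand side of Equation~(\ref{eq:upperbound}) is itself a maximization over all feasible $\bm{e}$, it suffices to exhibit \emph{one} feasible $\bm{e}$ achieving the bound, which the averaged-marginal construction does; the max can only be larger, so the inequality $f(\bm{x};t)\ge\widetilde{f}(\bm{x};t)$ might appear to go the wrong way — so the correct reading is that $\widetilde f$ is defined as that max and the claim is that the max value is still $\le f(\bm{x};t)$, which requires the bound to hold for \emph{every} feasible $\bm{e}$, not just the constructed one. I would therefore instead argue: for an arbitrary feasible $\bm{e}$, relate $g(\bm{e};t)$ to $f$ via the concavity inequality in the reverse direction, using that $\bm{e}$ ranges over a polytope whose extreme points correspond to genuine Top-$K$ selections, so that $\sum_i\gamma_i\eta_i g(\bm{e};t)$ is maximized at a vertex and there coincides with (a lower bound of) $f(\bm{x};t)$.

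The main obstacle I anticipate is pinning down the exact value of $\tau$ and making the concavity/Jensen manipulation tight enough that the resulting expression is \emph{literally} $\sum_i\gamma_i\eta_i g(\bm{e};t)$ rather than something weaker, while simultaneously handling the two cases $t\neq 1$ and $t=1$ uniformly (the $t=1$ logarithmic case is the limit and should follow by continuity, but the algebra of exponents degenerates and needs separate bookkeeping). A secondary subtlety is the regime $t>1$, where $1-t<0$ and the function $z\mapsto z^{1-t}/(1-t)$ is still concave on $(0,\infty)$ but the sign flips make the direction of every inequality easy to get wrong; I would track signs carefully by noting that $z^{1-t}/(1-t)$ is an increasing concave function of $z$ on $(0,\infty)$ for all $t\ge 0$, $t\neq 1$, which is the single structural fact the whole argument rests on. Once $\tau$ is fixed and the concavity step is set up with the correct orientation, collecting the per-item bounds and recognizing the sum as the program in Equation~(\ref{eq:upperbound}) is routine.
```
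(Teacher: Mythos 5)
There is a genuine gap, and it sits exactly where you yourself flag the difficulty. Having noticed that $\widetilde{f}$ is a \emph{maximum} over all feasible $\bm{e}$ (so exhibiting one feasible $\bm{e}$ built from $\bm{x}$ proves nothing about the direction $f(\bm{x};t)\ge\widetilde{f}$), you repair this by claiming that $\sum_i\gamma_i\eta_i g(\bm{e};t)$ ``is maximized at a vertex'' of the polytope $\{\bm{e}:\sum_i\bm{e}_i=K,\;0\le\bm{e}_i\le1\}$, where it would coincide with a Top-$K$ selection. That step fails: for every $t\ge0$ the map $z\mapsto z^{1-t}/(1-t)$ (and $z\mapsto\log z$) is \emph{concave}, so the objective is concave in $\bm{e}$; concave functions attain their \emph{minimum} at vertices of a polytope, and their maximum is generically interior (indeed, that the relaxed problem pushes mass toward equalized exposures is the whole point of the method). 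So the pivot on which your argument ultimately rests is invalid. The earlier Jensen step is also not the right tool in the direction you need: concavity of $g$ gives $g(\bm{v}_i)\ge\sum_u\lambda_{u,i}\,g(\cdot\,\bm{x}_{u,i})$, a bound in the individual coordinates $\bm{x}_{u,i}$, and passing from $\sum_u g(\bm{x}_{u,i})$ to $g\bigl(\sum_u\bm{x}_{u,i}\bigr)$ needs a super-/sub-additivity argument whose direction flips with the sign of $1/(1-t)$ and degenerates at $t=1$; you leave both this and the choice of $\tau$ (``something like $|\mathcal{U}|^{t}$'') unresolved, and you acknowledge as much.

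The paper's own justification is far more elementary and does not use Jensen or any vertex argument: it simply requires $\tau$ (hence $\eta_i=\tau\sum_{u}w_{u,i}$) to be chosen so that $\sum_{u\in\mathcal{U}}\sum_{i\in\mathcal{I}}w_{u,i}\bm{x}_{u,i}\ \ge\ \sum_{i\in\mathcal{I}}\eta_i\sum_{u\in\mathcal{U}}\bm{x}_{u,i}$, i.e.\ so that $\eta_i$ underestimates the utility actually accrued per unit of exposure (taking $\tau$ small enough, e.g.\ $\tau\le\min_{u,i}w_{u,i}/\sum_{u'}w_{u',i}$, suffices itemwise); with $\bm{e}_i$ interpreted as the (normalized) accumulated exposure of item $i$, monotonicity of $g$ in each utility then yields the stated lower bound. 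Some of your confusion is understandable, since the lemma is loosely stated (the right-hand side does not actually depend on $\bm{x}$, and the objective is written as $\gamma_i\eta_i g(\bm{e};t)$ rather than per item), but your proposed route does not close the gap, whereas the paper's choice-of-$\tau$ plus monotonicity argument does so in one line.
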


This lemma can be easily achieved when $
\sum_{u\in\mathcal{U}}\sum_{i\in\mathcal{I}} w_{u,i}\bm{x}_{u,i}\ge \sum_{i\in\mathcal{I}}\eta_i\sum_{u\in\mathcal{U}} \bm{x}_{u,i}.$ Let $\bm{e}^*$ be the optimal value of the variable $\bm{e}$, which represents the accumulated exposure of items over all users. Then we will apply the Sinkhorn algorithm~\cite{pham2020unbalanced} to project the averaged exposure $\bm{e}^*$ to recommendation decision variable $\bm{x}\in \mathcal{X}_s$ discussed in Section~\ref{sec:formulation}. 

\subsubsection{Optimal Transport Projection}
\label{sec:OT}

We obtain the final ranking result by utilizing the following sample process, where $\widetilde{\bm{x}}$ (i.e. ranking score distribution) is derived from the OT projection process.
\begin{equation}\label{eq:ranking_result}
    L_K(u) = \text{Sample}_{S \subset \{1, 2, \ldots, |\mathcal{I}|\}, |S| = K} \widetilde{\bm{x}}_{u}, \quad \forall u\in\mathcal{U},
\end{equation}
where it implies sampling $K$ non-repeated items to user $u$ according to the recommended probability $\widetilde{\bm{x}}_{u,i}\in[0,1]$ of item $i$.


We construct a matrix $\bm{C} = \mathbb{R}^{|\mathcal{U}|\times |\mathcal{I}|}$, where the element $\bm{C}_{u,i} = \gamma_i w_{u,i}$.
An OT problem can be formulated as:
\begin{equation}\label{eq:sinkhorn}
    \begin{aligned}
         \widetilde{\bm{x}} &= \argmin_{\bm{x}\ge 0} \langle \bm{x}, -\bm{C}\rangle + \lambda_{ot} H(\bm{x}) \quad  \textrm{s.t.} \quad \bm{x}\bm{1} = K\bm{1}, \quad \bm{1}^{\top}\bm{x} = \bm{e}^*
    \end{aligned},
\end{equation}
where $\bm{1}$ denotes a vector of ones, $\bm{e}^*$ denotes the optimal value of Equation~(\ref{eq:upperbound}) and $\lambda_{ot}$ is the coefficient of entropy regularizer. 
$\langle \bm{x}, -\bm{C}\rangle$ results transport plan lies on the Pareto frontier and
$
H(\bm{x}) = \sum_{u\in\mathcal{U}}\sum_{i\in\mathcal{I}} \bm{x}_{u,i}\log(\bm{x}_{u,i}),
$ which forces the variable $\bm{x}_{u,i}$ into the feasible region $[0,1]$. The constraint condition ensures that the ranking satisfies the limitation that each user can only be ranked among the top $K$ items, and it also guarantees that the exposure of each item aligns optimally with the predefined exposure vector $\bm{e}^*$.

This problem can be efficiently solved by the Sinkhorn algorithm~\cite{swanson2020rationalizing}, where the solution of the
form $\widetilde{\bm{x}} = \text{diag}(\bm{m})\bm{B}\text{diag}(\bm{n})$,
where diag$(\cdot)$ denote the generating diagonal matrix from vector,$ \bm{B} = e^{\frac{-C}{\lambda}}$, and $\bm{m}\in \mathbb{R}^{|\mathcal{U}|}$, $\bm{n}\in \mathbb{R}^{|\mathcal{I}|}$, which iteratively computes
$
    \bm{m} \leftarrow K\bm{1} \oslash \bm{B}\bm{n},\quad \bm{n} \leftarrow  \bm{e}^* \oslash \bm{B}\bm{m},
$
where $\oslash$ denotes element-wise division.

\begin{algorithm}[t]
    \caption{Learning algorithm of Tax-rank}
	\label{alg:alpha}
	\begin{algorithmic}[1]
	\REQUIRE User set $\mathcal{U}$, item set $\mathcal{I}$, ranking size $K$, tax rate $t$, OT coefficient $\lambda_{ot}$, item weight $\gamma_i, \forall i\in\mathcal{I}$, user-item ranking score $w_{u,i}, \forall u\in\mathcal{U},\forall i\in\mathcal{I}$ 
   
	\ENSURE The ranking result $L_K(u), \forall u\in \mathcal{U}$
	    \STATE Get the optimal averaged exposure $\bm{e}^*$ from Equation~(\ref{eq:upperbound}).
        \STATE Initialize $\bm{m} = K \bm{1}$,  $\bm{n} = \bm{e}^*$, $\bm{C}_{u,i} = \gamma_i w_{u,i}, \forall u\in\mathcal{U},\forall i\in\mathcal{I}, \bm{B} = e^{\frac{-C}{\lambda_{ot}}}$
        \FOR{$t=1,\cdots,T$}
            \STATE $\bm{m} = K\bm{1} \oslash \bm{B}\bm{n}$, $\bm{n} =  \bm{e}^* \oslash \bm{B}\bm{m}$
        \ENDFOR
        \STATE $\widetilde{\bm{x}} = \text{diag}(\bm{m})\bm{B}\text{diag}(\bm{n})$
        \STATE $L_K(u) = \text{Sample}_{S \subset \{1, 2, \ldots, |\mathcal{I}|\}, |S| = K} \widetilde{\bm{x}}_{u}, \quad \forall u\in\mathcal{U}$
	\end{algorithmic}
\end{algorithm}

\begin{figure*}
    \centering
    \includegraphics[width=0.95\linewidth]{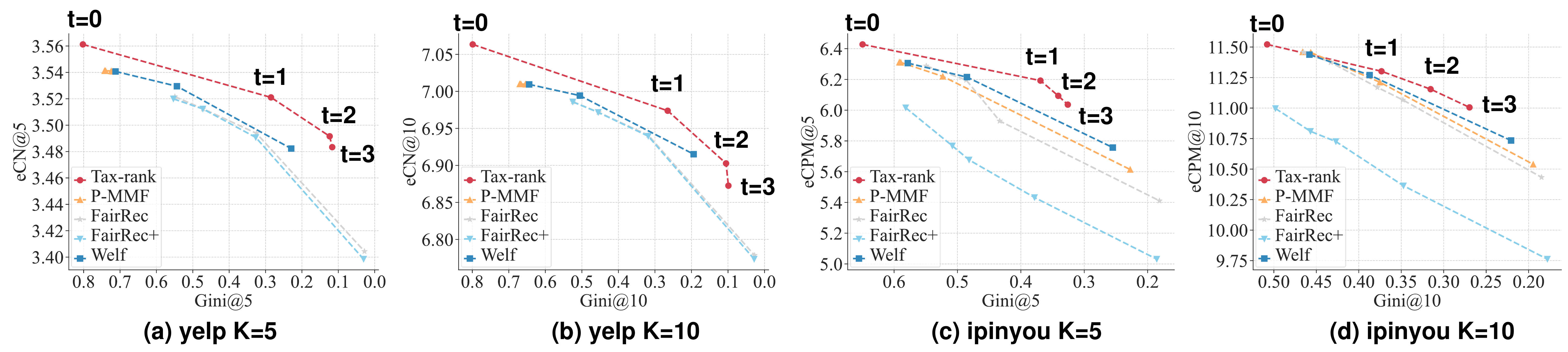}
    \caption{Pareto frontier with different top-K ranking under CTR-based settings (\ie $w_{u,i}$ is the CTR value of user-item pair).}
    \label{fig:Pareto}

\end{figure*}

\begin{figure*}
    \centering
    \includegraphics[width=0.95\linewidth]{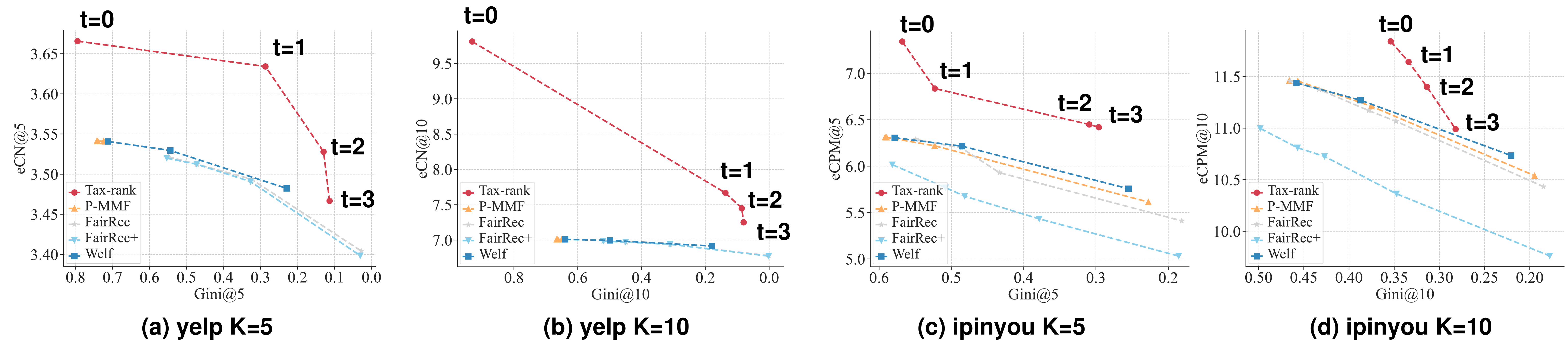}
    \caption{Pareto frontier with different top-K ranking under exposure-based settings (\ie $w_{u,i}=1$).}
    \label{fig:pareto_expo_based}
\end{figure*}


\section{Experiment}
We evaluate Tax-rank using two publicly available ranking datasets\footnote{The source codes have been shared in \url{https://github.com/XuChen0427/Tax-rank}}.

\subsection{Experimental settings}

\textbf{Dataset.} The experiments were based on two large-scale, publicly available ranking applications, including:
\textbf{Yelp}\footnote{\url{https://www.yelp.com/dataset}}: a large-scale businesses recommendation dataset. It has 154543 samples, which contain 17034 users, 11821 items.
\textbf{Ipinyou}~\cite{liao2014ipinyou}\footnote{\url{http://contest.ipinyou.com/}}: a large-scale advertising dataset. We only use the clicked data, which contains 18588 samples, which contains 18565 users, and 149 advertisements.
During the pre-processing step, users and items that had interactions with fewer than 5 items or users were excluded from the entire dataset to mitigate the issue of extreme sparsity. 

\textbf{Evaluation.} As for the evaluation metrics, the performances of the models were evaluated from two aspects: ranking accuracy, and fairness degree. As for the ranking accuracy, following the practices in~\cite{wu2021tfrom, xu2023ltp, yang2019bid}, we utilize excepted Click/Exposure Number (eCN) for recommendation and expected Cost Per Mile (eCPM) for advertising under top-$K$ ranking.:
\begin{equation}
    \text{eCN@K} = \frac{1}{|\mathcal{U}|}\sum_{i\in\mathcal{I}} \bm{v}_i,\quad
    \text{eCPM@K} = \frac{1}{|\mathcal{U}|}\sum_{i\in\mathcal{I}} \text{bid}_i\bm{v}_i,
\end{equation}
where, $bid_i$ denotes the bidding price of an advertisement, while $\bm{v}_i$ is calculated using Equation~\ref{eq:Tax-rank}, which is dependent on the $K$.

As for the fairness degree, we utilize the Gini Index~\cite{do2022optimizing, nips21welf}, which is the most common measure of item utility inequality under top-$K$ ranking. Formally, it is defined as:
\begin{equation}
    \text{Gini@K} = \frac{\sum_i \sum_j |\gamma_i\bm{v}_i-\gamma_j\bm{v}_j|}{2|\mathcal{I}|\sum_i \gamma_i\bm{v}_i},
\end{equation}
where it ranges from 0 to 1, with 0 representing perfect equality (every item has the same utility), and 1 representing perfect inequality (one item has all the utility, while every item else has none).

\textbf{Baselines.}
The following representative item fairness models were chosen as the baselines: \textbf{FairRec}~\cite{fairrec} and \textbf{FairRec+} \cite{fairrecplus} proposed to ensure Max-Min Share ($t$-MMS) of exposure for the items. \textbf{P-MMF}~\cite{xu2023p} utilized the mirror descent method to improve the worst-off item's utility. Moreover, we also compare \textbf{Welf}~\cite{nips21welf}, which also used the Frank-Wolfe algorithm to optimize a similar exponential form of optimization objective for two-sided fairness.

\textbf{Implementation details.} Following~\cite{xu2023p}, CTR-based settings, we use BPR~\cite{BPR} model to compute the CTR value $w_{u,i}\in [0,1]$ of each user-item pair $(u,i)$. For the item weight $\gamma_i$, 
a value of 1 is assigned for recommendation applications, while for advertising applications, $\gamma_i = \log(\text{bid}_i)$.

As for the hyper-parameters in all models, the trade-off coefficient ($\lambda_{ot}$) for OT  was tuned among $[0.1,2]$, and we implemented Tax-rank utilizing cvxpy~\cite{cvxpy} to conduct optimization for Equation~(\ref{eq:upperbound}).

\subsection{Experimental Results}
Figure~\ref{fig:Pareto} and Figure~\ref{fig:pareto_expo_based} shows the Pareto frontiers~\cite{xu2023p} of Gini Index (abbreviated as Gini.) and eCN/eCPM on two application datasets with different ranking size $K$. Figure~\ref{fig:Pareto} is generated using CTR-based settings, as described in Section~\ref{sec:formulation}, where the item utilities are defined as the expected number of clicks within a specified time period. On the other hand, Figure~\ref{fig:pareto_expo_based} is generated using exposure settings (also see Section~\ref{sec:formulation}), where the item utilities are defined as the expected number of exposures.
The Pareto frontiers~\cite{lotov2008visualizing} are constructed by systematically adjusting various parameters of the models and then selecting the points with the best performance in terms of both Gini@K and eCN@K/eCPM@K, resulting in an optimized trade-off between item fairness and total utilities.

Firstly, it is evident that a trade-off exists between ranking accuracy metrics (eCN@K or eCPM@K) and item fairness metric (Gini@K) \wrt the tax rate $t$. As the tax rate approaches $0$, FairTax tends to prioritize ranking accuracy, and as the tax rate increases, FairTax tends to emphasize item fairness.

Moreover, compared with the baseline methods, it becomes evident that the proposed Tax-rank method consistently outperforms the baseline methods under both CTR-based and exposure-based settings (as indicated by the  Tax-rank curves occupying the upper right corner). This Pareto dominance signifies that, for a given eCN@K or eCPM@K level,  Tax-rank achieves superior Gini@K values, and for a given Gini@K level, it attains better eCN@K or eCPM@K performance. These results highlight that  Tax-rank significantly outperforms the baseline methods.

\begin{figure*}[t]  
    \centering    
    \includegraphics[width=0.95\linewidth]{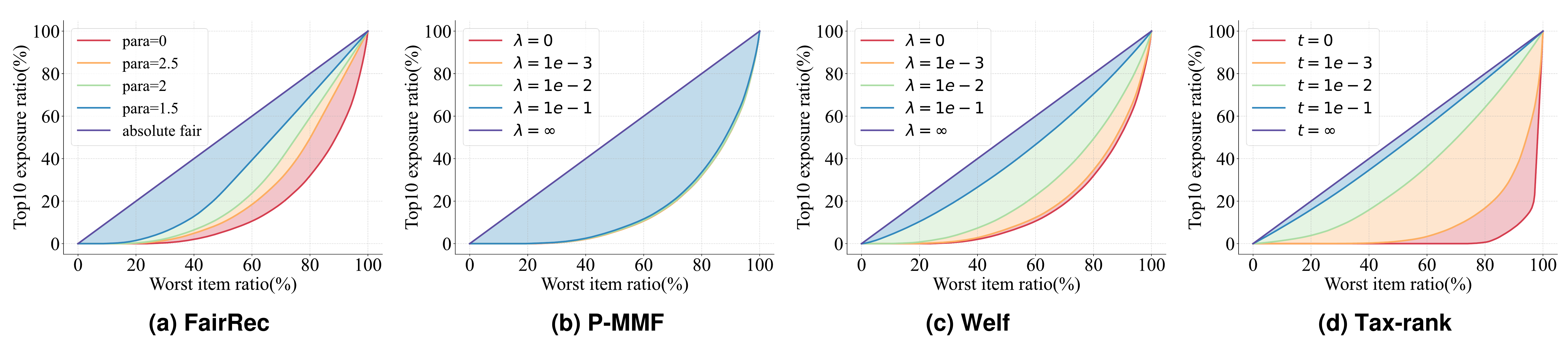}
    \caption{Lorenz Curve~\cite{Lorenzcurve} of three best-performing baselines FairRec, P-MMF and Welf and our model Tax-rank. The distinct curves in each figure are plotted by adjusting various tax rates $t$ or different parameters. }
    \label{fig:Lorenz_Curve}  
\end{figure*}

\begin{figure}[t]  
    \centering    
    \includegraphics[width=\linewidth]{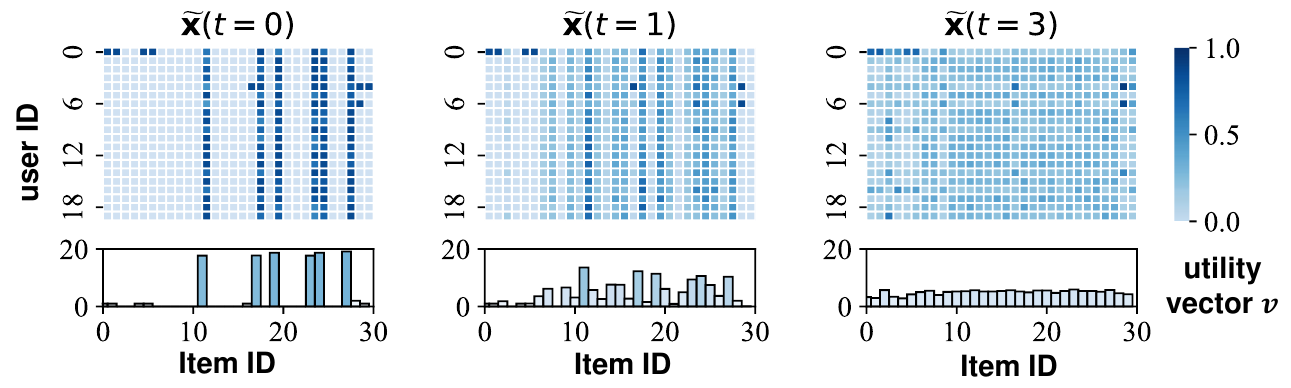}
    \caption{Visualization of \textit{Tax-rank} results under different tax rate $t=0,1,3$. The utility vector $\bm{x}$ is computed as the amortized value along the columns of the recommendation matrix.  }
    \label{fig:analysis_2}  
\end{figure}

\begin{figure}\
    \centering
     \subfigure[Price of taxation]
    {
        \includegraphics[width=0.46\linewidth]{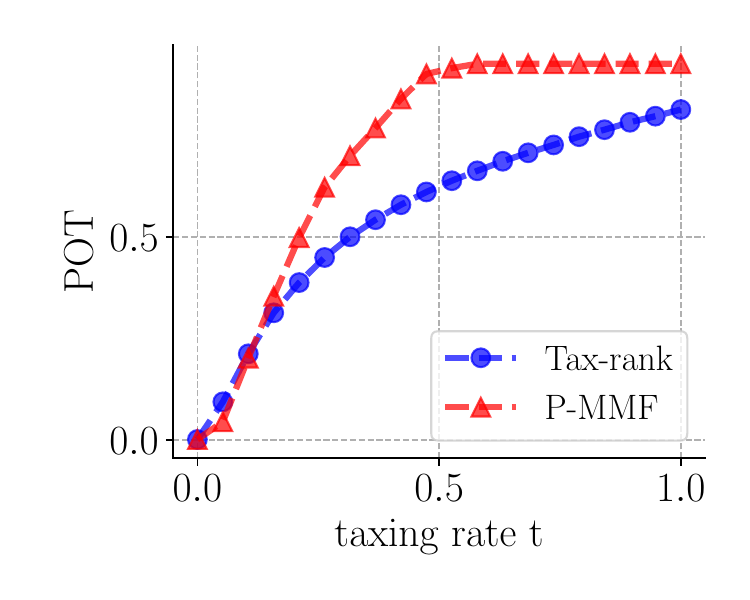}
    }
     \subfigure[Inference time comparison]
    {
        \includegraphics[width=0.46\linewidth]{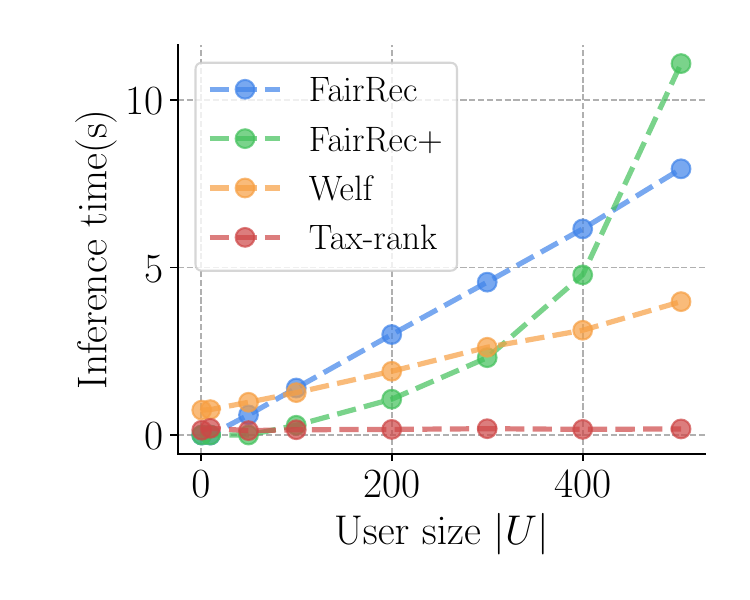}
    }
    \caption{Sub-figure (a) illustrates the price of taxation (POT) change w.r.t tax rate $t$. Sub-figure (b) describes online inference items for Tax-rank and other baselines w.r.t user size $|\mathcal{U}|$.}
    \label{fig:analysis_1}
\end{figure}


\subsection{Experimental Analysis}
We also conducted experiments to analyze our proposed Tax-rank on Yelp dataset for top-10 ranking. Similar phenomena can be observed in other datasets and with different top-k settings. 

\subsubsection{Continuity} Firstly, we experiment to assess the continuity of the proposed Tax-rank and other baseline methods by adjusting the tax rate $t$ or other parameters related to fairness. As shown in Figure~\ref{fig:Lorenz_Curve}, we draw the Lorenz Curve~\cite{Lorenzcurve} of three best-performing baselines and our model Tax-rank under different tax rate $t$ or other parameters. Since Lorenz Curve is a graphical representation that vividly illustrates income or wealth distribution within a population, providing insights into the level of inequality in economics.

Figure~\ref{fig:Lorenz_Curve} shows the proportion of overall exposure percentage assumed by the bottom $x$\% items. In simpler terms, the Lorenz Curve reveals the percentage (y\%) of the total exposures accumulated by the worst-off $x\%$ of items in the distribution. From Figure~\ref{fig:Lorenz_Curve}, it is evident that as the tax rate changes from $0$ to $\infty$, Tax-rank exhibits superior continuity. This is observed by the nearly averaged increase in the colored area between the two Lorenz Curves (can also be viewed as the fairness increasing value when increasing tax rate $t$). It indicates a more consistent response to a tenfold increase in the tax rate. On the contrary, as the tax rate changes, baselines, exhibit either minimal alterations (P-MMF) or an uneven increase (Welf and FairRec) in the colored area. 
This implies that a minor adjustment in the fair-aware parameter will result in a more significant performance alteration compared to Tax-rank.


\subsubsection{Controllability over accuracy loss.} Then, we experiment Figure~\ref{fig:analysis_1} (a) to demonstrate how the price fairness of the taxation (POT) changes of Tax-rank and baseline P-MMF \wrt variations in the tax rate $t$, ranging from $0$ to $1$. Since P-MMF is a represented and best-performing tax policy in Theorem~\ref{theo:previous_tax}. 

From the curve, it is evident that as we increase the tax rate $t$, our proposed Tax-rank and P-MMF approach leads to a reduction in the total ranking accuracy. 
Firstly, it is evident that the POT achieved by Tax-rank is notably lower compared to that of P-MMF, highlighting the effectiveness of the Tax-rank approach. Secondly, we can also observe that the POT of Tax-rank obeys a smooth form, as the theoretical analysis results in Theorem~\ref{theo:POF}. However, the previous ranking tax policies exhibit a more complex POT function form, resulting in poor systematic controllability.

\subsubsection{Inference time.} 
We conduct experiments to investigate the total inference time of the Tax-rank method compared to other item fairness baselines. In our analysis, our objective is to assess the total inference time across various user sizes $|\mathcal{U}|$, within real-world ranking applications. Therefore, we conducted tests to measure the total inference time of various models in terms of the varying number of users, all while keeping the number of items constant.

Figure~\ref{fig:analysis_1} (b) reports the curves of total inference time (s) w.r.t. user size $|\mathcal{U}|$. It is worth noting that the Tax-rank method demonstrates remarkably low inference times, typically taking less than ten million seconds across different user sizes. Furthermore, when compared to other baseline methods, the inference time of these alternatives tends to increase either linearly or exponentially with changing user sizes, whereas Tax-rank consistently maintains a low inference time.

This is attributed to the fact that the CP solver process (Equation~(\ref{eq:upperbound})) along with Tax-rank reduces the variable size from $|\mathcal{U}||\mathcal{I}|$ to $|\mathcal{I}|$ and employs the OT algorithm to efficiently map the solution to each user.
Therefore, Tax-rank method involves matrix operations with limited sensitivity to changes in user size.

\subsubsection{Visualizing fair re-ranking results.} 
In Figure~\ref{fig:analysis_2}, we visualize the Tax-rank recommendation results under the tax rate of $0, 1, 3$. In Figure~\ref{fig:analysis_2}, we visualize the ranking result matrix $\widetilde{\bm{x}}$, where the elements $\widetilde{\bm{x}}_{u,t}$ denotes the probability of recommending item $i$ to user $u$ in Equation~(\ref{eq:sinkhorn}). We also visualize the utility vector $\bm{v}$, which is the amortized value for the columns value of $\widetilde{\bm{x}}$ (\ie $\bm{v}=\sum_{u\in\mathcal{U}}\widetilde{\bm{x}}_{u,i}$). The utility vector $\bm{v}_i$ represents the utility value associated with the ownership of item $i$.

The results clearly demonstrate that the accuracy-first solution consistently ranks the most popular items highly for users, thereby enhancing overall utility but potentially leading to market dominance by 
a few top items. Regarding $t=1$, Tax-rank tends to distribute rankings to items in proportion to their contribution to the market. For $t=3$, Tax-rank method strives to provide equal exposure and similar utilities to every item in the ranking. The experiment also served as validation that Tax-rank method can effectively adapt to various fairness principles as intended.

\subsection{Ablation Experiments}\label{app:ablation}
In this section, we aim to conduct ablation experiments for Tax-rank. To better investigate the performance of our model under different parameter settings, we also conduct a series of ablation experiments on the Yelp dataset under ranking size $K=10$. Similar experiment results are also observed on other datasets and other ranking sizes $K$.

\begin{figure}
    \centering  
    \includegraphics[width=0.9\linewidth]{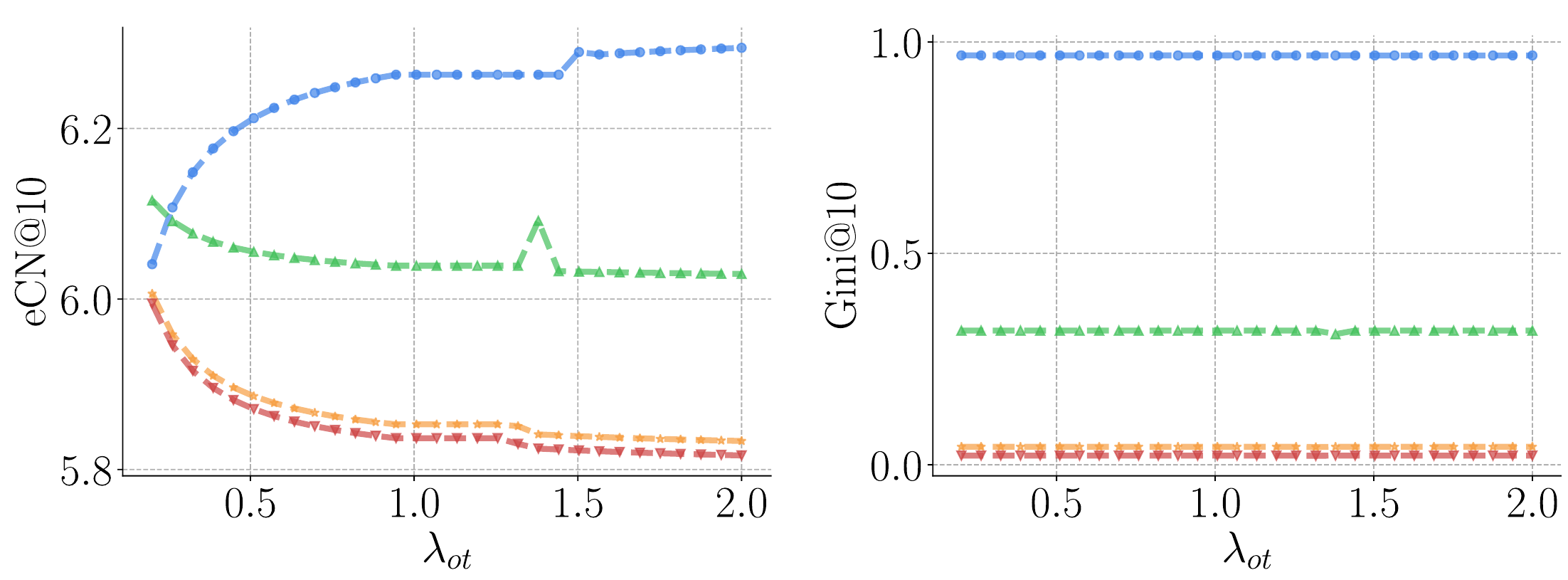}
    \caption{eCN and Gini change \wrt the coefficient of entropy regularizer $\lambda_{ot}$ in OT under different tax rate $t$.}
    \label{fig:abalation_1}
\end{figure}


\subsubsection{OT regularizer coefficient $\lambda_{ot}$.}

In this section, we first conduct the ablation study for the OT regularizer coefficient $\lambda_{ot}$ (Equation~(\ref{eq:sinkhorn}), which trade-off the entropy loss to force the variable $\bm{x}_{u,i}$ into the feasible region. Figure ~\ref{fig:abalation_1} illustrates how the ranking accuracy (eCN) and fairness metric (Gini) \wrt $\lambda_{ot}$ from $0.1-2.0$.

Specifically, from Figure ~\ref{fig:abalation_1} (a), we can observe that when the tax rate $t$ is small ($t<1$), the accuracy increases as $\lambda_{ot}$ becomes larger. When the tax rate $t$ is large ($t\ge1$), the accuracy will decrease as $\lambda_{ot}$ becomes larger. 
From Figure ~\ref{fig:abalation_1} (b), We can observe that Gini remains relatively stable across the different $\lambda_{ot}$. Moreover, as the tax rate $t$ increases, Tax-rank begins to prioritize fairness among items more prominently, resulting in a decrease in Gini.

The reason is the trade-off between the smoothness and convexity of the exposure probability distribution among items by tuning the parameter $\lambda_{ot}$. A higher $\lambda_{ot}$ will emphasize more on the smoothness of the probability distribution, implying a reduced disparity in exposure among items. When tax rate $t$ is relatively small ($t < 1$), a slight reduction in the difference in exposure probability among items allows some items with lower $w_{u,i}$ to gain more exposure, resulting in an increase in eCN. However, when $t$ becomes larger, the increased smoothness of probability distributions results in a decrease in eCN consequently.





\section{Conclusion}

In this paper, we re-conceptualize the fair re-ranking task by framing it as a taxation process from an economic perspective. In theory, we reformulate previous fair re-ranking architectures as a tax process, which imposes an additional tax on each item. However, we find that previous fair re-ranking methods fail to satisfy two crucial taxation attributes: continuity and controllability over accuracy loss. To address this challenge, we introduce a novel fair re-ranking model named Tax-rank. The optimization objective of Tax-rank levies taxes based on the difference in utility between two items. For optimization, we introduce a highly effective and efficient algorithm to optimize the Tax-rank objective, employing the Sinkhorn algorithm in optimal transport. Theoretical evidence shows that Tax-rank has superior continuity and systematic controllability compared to existing methods. Through extensive experiments on two publicly available datasets, it is evident that Tax-rank outperforms state-of-the-art baselines in terms of both fair ranking performance and efficiency.

\begin{acks}
This work was funded by the National Key R\&D Program of China (2023YFA1008704), the National Natural Science Foundation of China (No. 62377044, 62276248), the Youth Innovation Promotion Association CAS (No. 2023111), Beijing Key Laboratory of Big Data Management and Analysis Methods, Major Innovation \& Planning Interdisciplinary Platform for the  ``Double-First Class” Initiative, Public Computing Cloud, funds for building world-class universities (disciplines) of Renmin University of China. Supported by the Outstanding Innovative Talents Cultivation Funded Programs 2024 of Renmin University of China.
\end{acks}

\newpage
\appendix

\section*{Appendix}
\section{Proof of Theorem~\ref{theo:previous_tax}}\label{app:previous_tax}
\begin{proof}
    Let $\bm{a}_{[i]}$ denotes the $i$-th largest element of $\bm{a}$. Firstly, we will prove Equation~(\ref{eq:constraint}) can be written as the same form of Equation~(\ref{eq:lambda}).
    By utilizing the Lagrange multiplier method, we can write the objective of Equation~(\ref{eq:constraint}) as
    \[
    W_2(\bm{x}) = \min_{\lambda_2 \ge 0}\max_{\bm{x}\in \mathcal{X}} \sum_{i\in\mathcal{I}} \gamma_i\sum_{u\in\mathcal{U}} w_{u,i}\bm{x}_{u,i} + \lambda_2 (r(\bm{v})-\pi),
    \]
    where optimal value is $\argmin_{\lambda_2} \left[\sum_{u\in\mathcal{U}}\sum_{k=1}^K c_{n,[k]} +  r(\bm{v})\lambda_2-\lambda_2\pi \right]$, $c_{n,i} = \gamma_i w_{u,i}$. Therefore, we can only analyze the form of Equation~(\ref{eq:lambda}). Then following~\citet{xu2023p, balseiro2021regularized}, we can also utilize the Lagrangian condition to break the relationship between $\bm{v}$ and $\sum_{u\in\mathcal{U}}\sum_{i\in\mathcal{I}} s_{t,i}x_{t,i}$:
    $
         W_1(\bm{x}) = \max_{\bm{x}\in \mathcal{X}} \sum_{u\in\mathcal{U}}\sum_{i\in\mathcal{I}} s_{t,i}\bm{x}_{u,i} + g(\mu^*),
    $
    where $s_{t,i} = \gamma_i w_{u,i} + \bm{\mu}_i^*, g(\mu)=\max_{v} \left(\lambda r(\bm{v}) + \sum_{i\in\mathcal{I}}\bm{\mu}_i\bm{v}_i\right)$ and 
    \[
    \mu^*=\argmin_{\bm{\mu}}\max_{\bm{x}\in \mathcal{X}} \sum_{u\in\mathcal{U}}\sum_{i\in\mathcal{I}} (\gamma_i w_{u,i} + \bm{\mu}_i)\bm{x}_{u,i} + g(\mu).
    \]
   
\end{proof}


\section{Proof of Theorem~\ref{theo:non-continuty}}\label{app:non-continuty}
\begin{proof}
    According the Theorem~\ref{theo:previous_tax}, we can know the Equation~(\ref{eq:constraint}) and Equation~(\ref{eq:lambda}) is equivalent. Therefore, we mainly analyze the continuity of regularizer-based form in Equation~(\ref{eq:lambda}):
    $W_1(\bm{x};\lambda) = \max_{\bm{x}\in \mathcal{X}} \sum_{u\in\mathcal{U}}\sum_{i\in\mathcal{I}} n_{u,i}\bm{x}_{u,i} + r(\bm{v})\lambda$. Then we will prove $W_1$ is not continuous \wrt tax rate $\lambda$. 
    
    We can observe that the binary ranking solution $\bm{x}_{u,i}$ also leads to non-continuity. We can see that if the function $W_1(\bm{x};\lambda)$ changes, it implies that there exists an item $i_r$ being removed from a $u$'s ranking list, while simultaneously a new item $i_k$ is introduced and added to that $u$'s ranking list. Then the tax rate $\lambda$'s changing value $\delta \lambda$ rate should be at least 
    \[
        \delta \lambda > \frac{|n_{u,i_r}-n_{u,i_k}|}{|\delta r(\bm{v})|}
    \]
    to result a different recommendation results, where $\delta r(\bm{v})$ is the change value of fairness value when item $i_r$ is replaced with item $i_k$. Since the $r(\bm{v})$ is also often non-continuous, such as max-min form~\cite{xu2023p}, therefore, we have $\exists M>0, \delta r(\bm{v})>M$.
    

\end{proof}


\section{Proof of Theorem~\ref{theo:POF}}\label{app:POF}

    \begin{lemma}\label{lemma:1}
    Given the vector $\bm{x}\in\mathbb{R}^N$, $\sum_{i=1}^N w_i\bm{x}_i^{1-t} \ge C$, for any $t>0, w_i>0, \bm{x}_i>0, \sigma=\min_i \bm{x_i}$ we have 
    \begin{equation}
        \sum_{i=1}^N w_i\bm{x}_i \ge C\sigma^{t} N^{-\frac{t}{1+t}},
    \end{equation}
\end{lemma}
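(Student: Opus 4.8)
The plan is to sidestep any inequality machinery and instead establish a pointwise bound that is then summed. Set $\sigma=\min_i\bm{x}_i$, so that $\bm{x}_i\ge\sigma>0$ for every $i$. The only feature of $t$ I intend to use is that $t>0$, which makes $z\mapsto z^{t}$ nondecreasing on $(0,\infty)$; in particular $\bm{x}_i^{t}\ge\sigma^{t}$. This is the crux, and it deliberately never touches the exponent $1-t$, so the three regimes $t<1$, $t=1$, $t>1$ are all handled uniformly.

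Next I would multiply $\bm{x}_i^{t}\ge\sigma^{t}$ by $w_i\bm{x}_i^{1-t}>0$ and use $\bm{x}_i^{1-t}\bm{x}_i^{t}=\bm{x}_i$ to get, for each $i$,
\[
w_i\bm{x}_i \;=\; w_i\bm{x}_i^{1-t}\bm{x}_i^{t} \;\ge\; \sigma^{t}\,w_i\bm{x}_i^{1-t}.
\]
Summing over $i=1,\dots,N$ and invoking the hypothesis $\sum_{i=1}^N w_i\bm{x}_i^{1-t}\ge C$ yields $\sum_{i=1}^N w_i\bm{x}_i\ge\sigma^{t}\sum_{i=1}^N w_i\bm{x}_i^{1-t}\ge C\sigma^{t}$. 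To finish, note $N\ge 1$ and $t/(1+t)\ge 0$ force $N^{-t/(1+t)}\le 1$; hence if $C>0$ then $C\sigma^{t}\ge C\sigma^{t}N^{-t/(1+t)}$, while if $C\le 0$ the desired inequality is immediate since its left-hand side is a sum of strictly positive terms. In either case $\sum_{i=1}^N w_i\bm{x}_i\ge C\sigma^{t}N^{-t/(1+t)}$.

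I do not expect a genuine obstacle here: the statement is in fact weaker than the bound $\sum_i w_i\bm{x}_i\ge C\sigma^{t}$ that drops out for free, the factor $N^{-t/(1+t)}\le 1$ being pure slack (presumably inserted so the lemma matches a later step that carries a compensating power of $N$). If one instead wanted the $N$-power to be essential rather than appended by hand, the natural alternative is Hölder's inequality with the conjugate pair $\bigl(1+t,\ (1+t)/t\bigr)$ applied after splitting each summand and bounding the residual $\bm{x}_i$-powers below by $\sigma$; the only annoyance there is that Hölder returns $\sum_i w_i$ where one wants $N$, so that route needs an extra normalization such as $w_i\le 1$. Accordingly, I would present the elementary term-by-term argument above.
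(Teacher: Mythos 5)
Your proof is correct, but it takes a genuinely different and more elementary route than the paper's. The paper proves the lemma via H\"older's inequality with conjugate exponents $1+t$ and $(1+t)/t$ applied to $C \le \sum_i (w_i\bm{x}_i)\,\bm{x}_i^{-t}$, then removes the resulting power via the norm comparison $\bigl(\sum_i (w_i\bm{x}_i)^{1+t}\bigr)^{1/(1+t)} \le \sum_i w_i\bm{x}_i$ and bounds $\bigl(\sum_i \bm{x}_i^{-(1+t)}\bigr)^{t/(1+t)} \le N^{t/(1+t)}\sigma^{-t}$; this is exactly where the factor $N^{-t/(1+t)}$ originates. You instead use the single pointwise estimate $\bm{x}_i^{t}\ge\sigma^{t}$, multiply by $w_i\bm{x}_i^{1-t}>0$, and sum, which yields the strictly stronger conclusion $\sum_i w_i\bm{x}_i \ge C\sigma^{t}$, after which $N^{-t/(1+t)}\le 1$ (plus the trivial case $C\le 0$) gives the stated bound. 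Your diagnosis is right: as stated, the $N$-power in the lemma is pure slack, and your term-by-term argument is both simpler and sharper; it would still plug into the proof of Theorem~\ref{theo:POF}, where the lemma is applied with $C=\lambda\,\textbf{Acc}(0)\,\bm{z}_1^{-t}$ and $\sigma=\bm{z}_{|\mathcal{I}|}$, only improving the resulting estimate. What the paper's H\"older route buys is that the $N^{-t/(1+t)}$ dependence appears organically, matching the $1-O(|\mathcal{U}|^{-\frac{t}{1+t}})$ form quoted in Theorem~\ref{theo:POF}, whereas in your argument that factor is appended by hand. One small correction to your closing aside: the H\"older route does not need a normalization like $w_i\le 1$, because the paper groups $w_i\bm{x}_i$ entirely inside the first factor, so the second factor is $\sum_i\bm{x}_i^{-(1+t)}$ (bounded by $N\sigma^{-(1+t)}$), not $\sum_i w_i$.
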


\begin{proof}

    We apply Hölder inequality, we have when $\frac{1}{1+t} + \frac{t}{1+t} = 1$,
    \begin{equation}\label{eq:holder}
        C \leq \sum_{i=1}^N w_i\bm{x}_i*\frac{1}{\bm{x}_i^{t}} \leq (\sum_{i=1}^N w_i^{1+t}\bm{x}_i^{1+t})^{\frac{1}{1+t}}(\sum_{i=1}^N \frac{1}{\bm{x}_i^{1+t}})^{\frac{t}{1+t}}.
    \end{equation}

    Also, we have
    $
        (\sum_{i=1}^N w_i\bm{x}_i)^{1+t} \ge \sum_{i=1}^N w_i^{1+t} \bm{x}_i^{1+t},
    $
    therefore, we have 
    $
        (\sum_{i=1}^N w_i^{1+t}\bm{x}_i^{1+t})^{\frac{1}{1+t}} \leq \sum_{i=1}^N w_i\bm{x}_i
    $
    Let $\sigma = \min_i \bm{x}_i$, we have
    \begin{equation}\label{eq:reduce2}
        (\sum_{i=1}^N \frac{1}{\bm{x}_i^{1+t}})^{\frac{t}{1+t}} \leq \frac{N^{\frac{t}{1+t}}}{\sigma^t}.
    \end{equation}

    Then combining Equation~(\ref{eq:reduce2}) and Equation~(\ref{eq:holder}) becomes:
    \[
        \sum_{i=1}^N w_ix_i  \ge C\sigma^{t} N^{-\frac{t}{1+t}}.
    \]
    
\end{proof}

Then we will give a formal proof of Theorem~\ref{theo:POF}.
\begin{proof}
    From Lemma~\ref{theo:upperbound}, we can also write  :
      \begin{equation}
        \begin{aligned}
            f(\bm{x}; t) \ge \widetilde{f}(\bm{x}; t) &\max_{\bm{e}\in\mathcal{E}} \sum_{i} m_i\frac{\bm{e}_i^{1-t}}{1-t},\\
        \end{aligned}
    \end{equation}
    where $\mathcal{E} = \{\bm{e}|\sum_i\bm{e}_i=K,\quad  0 \leq \bm{e}_i \leq 1\}$, the input $\bm{e}$ optimal value is $\bm{z}$, which represents be the best allocation under the $t$-fairness criterion. We briefly let  $\widetilde{\text{Acc}}(t)$ denotes the accuracy value of $\widetilde{f}(\bm{x};t)$.

    Firstly, we will bound the $\widetilde{\text{Acc}}(t)$:
    without generality, we assume:
    \begin{equation}\label{eq:increasing_order}
        m_1\bm{z}_1 \ge  m_2\bm{z}_2 \ge \cdots, \ge m_{|\mathcal{I}|}\bm{z}_{|\mathcal{I}|}.
    \end{equation}
    The necessary first-order condition for the optimality of $\bm{e}$ can be expressed as:
    \[
        \nabla f(\bm{z}; t)(\bm{e}-\bm{z}) \leq 0, \forall \bm{e}\in \mathcal{E},
    \]
    
    The equation can be equivalently written as:
    \begin{equation}\label{eq:first-condition}
        \bm{g}^{\top}\bm{e} \leq 1, \forall \bm{e}\in \mathcal{E},
    \end{equation}
    where
    $
        \bm{g}_i = \frac{m_i\bm{z}_i^{-t}}{\sum_{i}m_i\bm{z}_i^{1-t}}.
    $
    We observe the Equation~(\ref{eq:first-condition}), which is a well-studied knapsack problem~\cite{salkin1975knapsack}, with the best solution:
    \begin{equation}\label{eq:con1}
       \frac{ \sum_{k=1}^K m_{|{\mathcal{I}}|-k+1}\bm{z}_{|{\mathcal{I}}|-k+1}^{-t}}{\sum_{i}m_i\bm{z}_i^{1-t}} \leq 1,
    \end{equation}
    since according the Equation~(\ref{eq:increasing_order}), we have
    \[ 
        m_1\bm{z}_1^{-t} \leq\  m_2\bm{z}_2^{-t} \leq \cdots, \leq m_{|\mathcal{I}|}\bm{z}_{|\mathcal{I}|}^{-t}.
    \]

    From the  Equation~(\ref{eq:con1}) , exists $0<\lambda<1$, we have:
    \begin{equation}\label{eq:con2}
        \sum_i m_i\bm{z}_i^{1-t} \ge \lambda \text{Acc}(0) \bm{z}_1^{-t},
    \end{equation}
    \[
        \sum_{k=1}^K m_{|{\mathcal{I}}|-k+1}\bm{z}_{|{\mathcal{I}}|-k+1}^{-t} \ge \bm{z}_1^{-t} \lambda \sum_{k=1}^K m_{[k]} = \lambda \text{Acc}(0) \bm{z}_1^{-t},
    \]
    where $m_{[k]}$ denotes the $k$-th largest element of $m_i$.

    Taking the Equation~(\ref{eq:con2}) into Lemma~\ref{lemma:1}, we have
    \begin{equation}\label{eq:con3}
        \widetilde{\text{Acc}}(t)=\sum_i^{|\mathcal{I}|} m_i\bm{z}_i \ge \lambda \text{Acc}(0) (\frac{\bm{z}_{|\mathcal{I}|}}{\bm{z}_1})^{t} N^{-\frac{t}{1+t}}.
    \end{equation}
    Therefore,
    \begin{align*}
        \text{POT} &= \frac{\text{Acc}(0)-\text{Acc}(t)}{\text{Acc}(0)} \leq \frac{\text{Acc}(0)-\widetilde{\text{Acc}}(t)}{\text{Acc}(0)}\\
        &\leq  1- \lambda(\frac{\bm{z}_{|\mathcal{I}|}}{\bm{z}_1})^{t} N^{-\frac{t}{1+t}}\\
       & = 1 - O(N^{-\frac{t}{1+t}}).
    \end{align*}
    
\end{proof}

\bibliographystyle{ACM-Reference-Format}
\newpage
\bibliography{ref}


\begin{thebibliography}{53}


\ifx \showCODEN    \undefined \def \showCODEN     #1{\unskip}     \fi
\ifx \showDOI      \undefined \def \showDOI       #1{#1}\fi
\ifx \showISBNx    \undefined \def \showISBNx     #1{\unskip}     \fi
\ifx \showISBNxiii \undefined \def \showISBNxiii  #1{\unskip}     \fi
\ifx \showISSN     \undefined \def \showISSN      #1{\unskip}     \fi
\ifx \showLCCN     \undefined \def \showLCCN      #1{\unskip}     \fi
\ifx \shownote     \undefined \def \shownote      #1{#1}          \fi
\ifx \showarticletitle \undefined \def \showarticletitle #1{#1}   \fi
\ifx \showURL      \undefined \def \showURL       {\relax}        \fi
\providecommand\bibfield[2]{#2}
\providecommand\bibinfo[2]{#2}
\providecommand\natexlab[1]{#1}
\providecommand\showeprint[2][]{arXiv:#2}

\bibitem[Abdollahpouri et~al\mbox{.}(2020)]%
        {abdollahpouri2020multistakeholder}
\bibfield{author}{\bibinfo{person}{Himan Abdollahpouri}, \bibinfo{person}{Gediminas Adomavicius}, \bibinfo{person}{Robin Burke}, \bibinfo{person}{Ido Guy}, \bibinfo{person}{Dietmar Jannach}, \bibinfo{person}{Toshihiro Kamishima}, \bibinfo{person}{Jan Krasnodebski}, {and} \bibinfo{person}{Luiz Pizzato}.} \bibinfo{year}{2020}\natexlab{}.
\newblock \showarticletitle{Multistakeholder recommendation: Survey and research directions}.
\newblock \bibinfo{journal}{\emph{User Modeling and User-Adapted Interaction}} \bibinfo{volume}{30}, \bibinfo{number}{1} (\bibinfo{year}{2020}), \bibinfo{pages}{127--158}.
\newblock


\bibitem[Abdollahpouri et~al\mbox{.}(2017)]%
        {abdollahpouri2017controlling}
\bibfield{author}{\bibinfo{person}{Himan Abdollahpouri}, \bibinfo{person}{Robin Burke}, {and} \bibinfo{person}{Bamshad Mobasher}.} \bibinfo{year}{2017}\natexlab{}.
\newblock \showarticletitle{Controlling popularity bias in learning-to-rank recommendation}. In \bibinfo{booktitle}{\emph{Proceedings of the eleventh ACM conference on recommender systems}}. \bibinfo{pages}{42--46}.
\newblock


\bibitem[Abdollahpouri et~al\mbox{.}(2019)]%
        {abdollahpouri2019unfairness}
\bibfield{author}{\bibinfo{person}{Himan Abdollahpouri}, \bibinfo{person}{Masoud Mansoury}, \bibinfo{person}{Robin Burke}, {and} \bibinfo{person}{Bamshad Mobasher}.} \bibinfo{year}{2019}\natexlab{}.
\newblock \showarticletitle{The unfairness of popularity bias in recommendation}.
\newblock \bibinfo{journal}{\emph{arXiv preprint arXiv:1907.13286}} (\bibinfo{year}{2019}).
\newblock


\bibitem[Balseiro et~al\mbox{.}(2021)]%
        {balseiro2021regularized}
\bibfield{author}{\bibinfo{person}{Santiago Balseiro}, \bibinfo{person}{Haihao Lu}, {and} \bibinfo{person}{Vahab Mirrokni}.} \bibinfo{year}{2021}\natexlab{}.
\newblock \showarticletitle{Regularized online allocation problems: Fairness and beyond}. In \bibinfo{booktitle}{\emph{International Conference on Machine Learning}}. PMLR, \bibinfo{pages}{630--639}.
\newblock


\bibitem[Ben-Porat and Tennenholtz(2018)]%
        {ben2018game}
\bibfield{author}{\bibinfo{person}{Omer Ben-Porat} {and} \bibinfo{person}{Moshe Tennenholtz}.} \bibinfo{year}{2018}\natexlab{}.
\newblock \showarticletitle{A game-theoretic approach to recommendation systems with strategic content providers}.
\newblock \bibinfo{journal}{\emph{Advances in Neural Information Processing Systems}}  \bibinfo{volume}{31} (\bibinfo{year}{2018}).
\newblock


\bibitem[Bertsekas(1997)]%
        {bertsekas1997nonlinear}
\bibfield{author}{\bibinfo{person}{Dimitri~P Bertsekas}.} \bibinfo{year}{1997}\natexlab{}.
\newblock \showarticletitle{Nonlinear programming}.
\newblock \bibinfo{journal}{\emph{Journal of the Operational Research Society}} \bibinfo{volume}{48}, \bibinfo{number}{3} (\bibinfo{year}{1997}), \bibinfo{pages}{334--334}.
\newblock


\bibitem[Bertsimas et~al\mbox{.}(2011)]%
        {bertsimas2011price}
\bibfield{author}{\bibinfo{person}{Dimitris Bertsimas}, \bibinfo{person}{Vivek~F Farias}, {and} \bibinfo{person}{Nikolaos Trichakis}.} \bibinfo{year}{2011}\natexlab{}.
\newblock \showarticletitle{The price of fairness}.
\newblock \bibinfo{journal}{\emph{Operations research}} \bibinfo{volume}{59}, \bibinfo{number}{1} (\bibinfo{year}{2011}), \bibinfo{pages}{17--31}.
\newblock


\bibitem[Bertsimas et~al\mbox{.}(2012)]%
        {bertsimas2012efficiency}
\bibfield{author}{\bibinfo{person}{Dimitris Bertsimas}, \bibinfo{person}{Vivek~F Farias}, {and} \bibinfo{person}{Nikolaos Trichakis}.} \bibinfo{year}{2012}\natexlab{}.
\newblock \showarticletitle{On the efficiency-fairness trade-off}.
\newblock \bibinfo{journal}{\emph{Management Science}} \bibinfo{volume}{58}, \bibinfo{number}{12} (\bibinfo{year}{2012}), \bibinfo{pages}{2234--2250}.
\newblock


\bibitem[Biswas et~al\mbox{.}(2021)]%
        {fairrecplus}
\bibfield{author}{\bibinfo{person}{Arpita Biswas}, \bibinfo{person}{Gourab~K Patro}, \bibinfo{person}{Niloy Ganguly}, \bibinfo{person}{Krishna~P Gummadi}, {and} \bibinfo{person}{Abhijnan Chakraborty}.} \bibinfo{year}{2021}\natexlab{}.
\newblock \showarticletitle{Toward Fair Recommendation in Two-sided Platforms}.
\newblock \bibinfo{journal}{\emph{ACM Transactions on the Web (TWEB)}} \bibinfo{volume}{16}, \bibinfo{number}{2} (\bibinfo{year}{2021}), \bibinfo{pages}{1--34}.
\newblock


\bibitem[Brittain(1971)]%
        {brittain1971incidence}
\bibfield{author}{\bibinfo{person}{John~A Brittain}.} \bibinfo{year}{1971}\natexlab{}.
\newblock \showarticletitle{The incidence of social security payroll taxes}.
\newblock \bibinfo{journal}{\emph{The American Economic Review}} \bibinfo{volume}{61}, \bibinfo{number}{1} (\bibinfo{year}{1971}), \bibinfo{pages}{110--125}.
\newblock


\bibitem[Cacchiani et~al\mbox{.}(2022)]%
        {cacchiani2022knapsack}
\bibfield{author}{\bibinfo{person}{Valentina Cacchiani}, \bibinfo{person}{Manuel Iori}, \bibinfo{person}{Alberto Locatelli}, {and} \bibinfo{person}{Silvano Martello}.} \bibinfo{year}{2022}\natexlab{}.
\newblock \showarticletitle{Knapsack problems—An overview of recent advances. Part II: Multiple, multidimensional, and quadratic knapsack problems}.
\newblock \bibinfo{journal}{\emph{Computers \& Operations Research}}  \bibinfo{volume}{143} (\bibinfo{year}{2022}), \bibinfo{pages}{105693}.
\newblock


\bibitem[Deldjoo et~al\mbox{.}(2022)]%
        {deldjoo2022survey}
\bibfield{author}{\bibinfo{person}{Yashar Deldjoo}, \bibinfo{person}{Dietmar Jannach}, \bibinfo{person}{Alejandro Bellogin}, \bibinfo{person}{Alessandro Difonzo}, {and} \bibinfo{person}{Dario Zanzonelli}.} \bibinfo{year}{2022}\natexlab{}.
\newblock \showarticletitle{A survey of research on fair recommender systems}.
\newblock \bibinfo{journal}{\emph{arXiv preprint arXiv:2205.11127}} (\bibinfo{year}{2022}).
\newblock


\bibitem[Diamond and Boyd(2016)]%
        {cvxpy}
\bibfield{author}{\bibinfo{person}{Steven Diamond} {and} \bibinfo{person}{Stephen Boyd}.} \bibinfo{year}{2016}\natexlab{}.
\newblock \showarticletitle{CVXPY: A Python-embedded modeling language for convex optimization}.
\newblock \bibinfo{journal}{\emph{The Journal of Machine Learning Research}} \bibinfo{volume}{17}, \bibinfo{number}{1} (\bibinfo{year}{2016}), \bibinfo{pages}{2909--2913}.
\newblock


\bibitem[Do et~al\mbox{.}(2021)]%
        {nips21welf}
\bibfield{author}{\bibinfo{person}{Virginie Do}, \bibinfo{person}{Sam Corbett-Davies}, \bibinfo{person}{Jamal Atif}, {and} \bibinfo{person}{Nicolas Usunier}.} \bibinfo{year}{2021}\natexlab{}.
\newblock \showarticletitle{Two-sided fairness in rankings via Lorenz dominance}.
\newblock \bibinfo{journal}{\emph{Advances in Neural Information Processing Systems}}  \bibinfo{volume}{34} (\bibinfo{year}{2021}), \bibinfo{pages}{8596--8608}.
\newblock


\bibitem[Do and Usunier(2022a)]%
        {Do22GeneralizedGini}
\bibfield{author}{\bibinfo{person}{Virginie Do} {and} \bibinfo{person}{Nicolas Usunier}.} \bibinfo{year}{2022}\natexlab{a}.
\newblock \showarticletitle{Optimizing Generalized Gini Indices for Fairness in Rankings} \emph{(\bibinfo{series}{SIGIR '22})}. \bibinfo{publisher}{Association for Computing Machinery}, \bibinfo{address}{New York, NY, USA}, \bibinfo{pages}{737–747}.
\newblock
\showISBNx{9781450387323}


\bibitem[Do and Usunier(2022b)]%
        {do2022optimizing}
\bibfield{author}{\bibinfo{person}{Virginie Do} {and} \bibinfo{person}{Nicolas Usunier}.} \bibinfo{year}{2022}\natexlab{b}.
\newblock \showarticletitle{Optimizing generalized Gini indices for fairness in rankings}. In \bibinfo{booktitle}{\emph{Proceedings of the 45th International ACM SIGIR Conference on Research and Development in Information Retrieval}}. \bibinfo{pages}{737--747}.
\newblock


\bibitem[Gastwirth(1971)]%
        {Lorenzcurve}
\bibfield{author}{\bibinfo{person}{Joseph~L Gastwirth}.} \bibinfo{year}{1971}\natexlab{}.
\newblock \showarticletitle{A general definition of the Lorenz curve}.
\newblock \bibinfo{journal}{\emph{Econometrica: Journal of the Econometric Society}} (\bibinfo{year}{1971}), \bibinfo{pages}{1037--1039}.
\newblock


\bibitem[Hager(1979)]%
        {hager1979lipschitz}
\bibfield{author}{\bibinfo{person}{William~W Hager}.} \bibinfo{year}{1979}\natexlab{}.
\newblock \showarticletitle{Lipschitz continuity for constrained processes}.
\newblock \bibinfo{journal}{\emph{SIAM Journal on Control and Optimization}} \bibinfo{volume}{17}, \bibinfo{number}{3} (\bibinfo{year}{1979}), \bibinfo{pages}{321--338}.
\newblock


\bibitem[Hanlon and Heitzman(2010)]%
        {hanlon2010review}
\bibfield{author}{\bibinfo{person}{Michelle Hanlon} {and} \bibinfo{person}{Shane Heitzman}.} \bibinfo{year}{2010}\natexlab{}.
\newblock \showarticletitle{A review of tax research}.
\newblock \bibinfo{journal}{\emph{Journal of accounting and Economics}} \bibinfo{volume}{50}, \bibinfo{number}{2-3} (\bibinfo{year}{2010}), \bibinfo{pages}{127--178}.
\newblock


\bibitem[Jiang et~al\mbox{.}(2021)]%
        {jiang2021generalized}
\bibfield{author}{\bibinfo{person}{Zhimeng Jiang}, \bibinfo{person}{Xiaotian Han}, \bibinfo{person}{Chao Fan}, \bibinfo{person}{Fan Yang}, \bibinfo{person}{Ali Mostafavi}, {and} \bibinfo{person}{Xia Hu}.} \bibinfo{year}{2021}\natexlab{}.
\newblock \showarticletitle{Generalized demographic parity for group fairness}. In \bibinfo{booktitle}{\emph{International Conference on Learning Representations}}.
\newblock


\bibitem[Lambert(1992)]%
        {lambert1992distribution}
\bibfield{author}{\bibinfo{person}{Peter~J Lambert}.} \bibinfo{year}{1992}\natexlab{}.
\newblock \bibinfo{booktitle}{\emph{The distribution and redistribution of income}}.
\newblock \bibinfo{publisher}{Springer}.
\newblock


\bibitem[Lamont(2017)]%
        {lamont2017distributive}
\bibfield{author}{\bibinfo{person}{Julian Lamont}.} \bibinfo{year}{2017}\natexlab{}.
\newblock \bibinfo{booktitle}{\emph{Distributive justice}}.
\newblock \bibinfo{publisher}{Routledge}.
\newblock


\bibitem[Lan et~al\mbox{.}(2010)]%
        {lan2010axiomatic}
\bibfield{author}{\bibinfo{person}{Tian Lan}, \bibinfo{person}{David Kao}, \bibinfo{person}{Mung Chiang}, {and} \bibinfo{person}{Ashutosh Sabharwal}.} \bibinfo{year}{2010}\natexlab{}.
\newblock \bibinfo{booktitle}{\emph{An axiomatic theory of fairness in network resource allocation}}.
\newblock \bibinfo{publisher}{IEEE}.
\newblock


\bibitem[Li et~al\mbox{.}(2021)]%
        {li2021user}
\bibfield{author}{\bibinfo{person}{Yunqi Li}, \bibinfo{person}{Hanxiong Chen}, \bibinfo{person}{Zuohui Fu}, \bibinfo{person}{Yingqiang Ge}, {and} \bibinfo{person}{Yongfeng Zhang}.} \bibinfo{year}{2021}\natexlab{}.
\newblock \showarticletitle{User-oriented fairness in recommendation}. In \bibinfo{booktitle}{\emph{Proceedings of the web conference 2021}}. \bibinfo{pages}{624--632}.
\newblock


\bibitem[Li et~al\mbox{.}({[n.\,d.]})]%
        {lifairness}
\bibfield{author}{\bibinfo{person}{Yunqi Li}, \bibinfo{person}{Hanxiong Chen}, \bibinfo{person}{Shuyuan Xu}, \bibinfo{person}{Yingqiang Ge}, \bibinfo{person}{Juntao Tan}, \bibinfo{person}{Shuchang Liu}, {and} \bibinfo{person}{Yongfeng Zhang}.} \bibinfo{year}{[n.\,d.]}\natexlab{}.
\newblock \showarticletitle{Fairness in Recommendation: Foundations, Methods and Applications}.
\newblock \bibinfo{journal}{\emph{ACM Transactions on Intelligent Systems and Technology}} (\bibinfo{year}{[n.\,d.]}).
\newblock


\bibitem[Liao et~al\mbox{.}(2014)]%
        {liao2014ipinyou}
\bibfield{author}{\bibinfo{person}{Hairen Liao}, \bibinfo{person}{Lingxiao Peng}, \bibinfo{person}{Zhenchuan Liu}, {and} \bibinfo{person}{Xuehua Shen}.} \bibinfo{year}{2014}\natexlab{}.
\newblock \showarticletitle{iPinYou global rtb bidding algorithm competition dataset}. In \bibinfo{booktitle}{\emph{Proceedings of the Eighth International Workshop on Data Mining for Online Advertising}}. \bibinfo{pages}{1--6}.
\newblock


\bibitem[Lipani(2016)]%
        {lipani2016fairness}
\bibfield{author}{\bibinfo{person}{Aldo Lipani}.} \bibinfo{year}{2016}\natexlab{}.
\newblock \showarticletitle{Fairness in information retrieval}. In \bibinfo{booktitle}{\emph{Proceedings of the 39th International ACM SIGIR conference on Research and Development in Information Retrieval}}. \bibinfo{pages}{1171--1171}.
\newblock


\bibitem[Liu et~al\mbox{.}(2021)]%
        {liu2021neural}
\bibfield{author}{\bibinfo{person}{Xiangyu Liu}, \bibinfo{person}{Chuan Yu}, \bibinfo{person}{Zhilin Zhang}, \bibinfo{person}{Zhenzhe Zheng}, \bibinfo{person}{Yu Rong}, \bibinfo{person}{Hongtao Lv}, \bibinfo{person}{Da Huo}, \bibinfo{person}{Yiqing Wang}, \bibinfo{person}{Dagui Chen}, \bibinfo{person}{Jian Xu}, {et~al\mbox{.}}} \bibinfo{year}{2021}\natexlab{}.
\newblock \showarticletitle{Neural auction: End-to-end learning of auction mechanisms for e-commerce advertising}. In \bibinfo{booktitle}{\emph{Proceedings of the 27th ACM Conference on Knowledge Discovery \& Data Mining}}. \bibinfo{pages}{3354--3364}.
\newblock


\bibitem[Lotov and Miettinen(2008)]%
        {lotov2008visualizing}
\bibfield{author}{\bibinfo{person}{Alexander~V Lotov} {and} \bibinfo{person}{Kaisa Miettinen}.} \bibinfo{year}{2008}\natexlab{}.
\newblock \showarticletitle{Visualizing the Pareto frontier}.
\newblock In \bibinfo{booktitle}{\emph{Multiobjective optimization}}. \bibinfo{publisher}{Springer}, \bibinfo{pages}{213--243}.
\newblock


\bibitem[Matsumoto and Juang(2016)]%
        {matsumoto2016culture}
\bibfield{author}{\bibinfo{person}{David Matsumoto} {and} \bibinfo{person}{Linda Juang}.} \bibinfo{year}{2016}\natexlab{}.
\newblock \bibinfo{booktitle}{\emph{Culture and psychology}}.
\newblock \bibinfo{publisher}{Cengage Learning}.
\newblock


\bibitem[Naghiaei et~al\mbox{.}(2022)]%
        {cpfair}
\bibfield{author}{\bibinfo{person}{Mohammadmehdi Naghiaei}, \bibinfo{person}{Hossein~A Rahmani}, {and} \bibinfo{person}{Yashar Deldjoo}.} \bibinfo{year}{2022}\natexlab{}.
\newblock \showarticletitle{Cpfair: Personalized consumer and producer fairness re-ranking for recommender systems}.
\newblock \bibinfo{journal}{\emph{arXiv preprint arXiv:2204.08085}} (\bibinfo{year}{2022}).
\newblock


\bibitem[Nash~Jr(1950)]%
        {nash1950bargaining}
\bibfield{author}{\bibinfo{person}{John~F Nash~Jr}.} \bibinfo{year}{1950}\natexlab{}.
\newblock \showarticletitle{The bargaining problem}.
\newblock \bibinfo{journal}{\emph{Econometrica: Journal of the econometric society}} (\bibinfo{year}{1950}), \bibinfo{pages}{155--162}.
\newblock


\bibitem[Nerr{\'e}(2001)]%
        {nerre2001concept}
\bibfield{author}{\bibinfo{person}{Birger Nerr{\'e}}.} \bibinfo{year}{2001}\natexlab{}.
\newblock \showarticletitle{The concept of tax culture}. In \bibinfo{booktitle}{\emph{Proceedings. Annual Conference on Taxation and Minutes of the Annual Meeting of the National Tax Association}}, Vol.~\bibinfo{volume}{94}. JSTOR, \bibinfo{pages}{288--295}.
\newblock


\bibitem[Oates(1969)]%
        {oates1969effects}
\bibfield{author}{\bibinfo{person}{Wallace~E Oates}.} \bibinfo{year}{1969}\natexlab{}.
\newblock \showarticletitle{The effects of property taxes and local public spending on property values: An empirical study of tax capitalization and the Tiebout hypothesis}.
\newblock \bibinfo{journal}{\emph{Journal of political economy}} \bibinfo{volume}{77}, \bibinfo{number}{6} (\bibinfo{year}{1969}), \bibinfo{pages}{957--971}.
\newblock


\bibitem[Patro et~al\mbox{.}(2020)]%
        {fairrec}
\bibfield{author}{\bibinfo{person}{Gourab~K Patro}, \bibinfo{person}{Arpita Biswas}, \bibinfo{person}{Niloy Ganguly}, \bibinfo{person}{Krishna~P Gummadi}, {and} \bibinfo{person}{Abhijnan Chakraborty}.} \bibinfo{year}{2020}\natexlab{}.
\newblock \showarticletitle{Fairrec: Two-sided fairness for personalized recommendations in two-sided platforms}. In \bibinfo{booktitle}{\emph{Proceedings of the web conference 2020}}. \bibinfo{pages}{1194--1204}.
\newblock


\bibitem[Patro et~al\mbox{.}(2022)]%
        {patro2022fair}
\bibfield{author}{\bibinfo{person}{Gourab~K Patro}, \bibinfo{person}{Lorenzo Porcaro}, \bibinfo{person}{Laura Mitchell}, \bibinfo{person}{Qiuyue Zhang}, \bibinfo{person}{Meike Zehlike}, {and} \bibinfo{person}{Nikhil Garg}.} \bibinfo{year}{2022}\natexlab{}.
\newblock \showarticletitle{Fair ranking: a critical review, challenges, and future directions}. In \bibinfo{booktitle}{\emph{Proceedings of the 2022 ACM Conference on Fairness, Accountability, and Transparency}}. \bibinfo{pages}{1929--1942}.
\newblock


\bibitem[Peyr{\'e} et~al\mbox{.}(2019)]%
        {peyre2019computational}
\bibfield{author}{\bibinfo{person}{Gabriel Peyr{\'e}}, \bibinfo{person}{Marco Cuturi}, {et~al\mbox{.}}} \bibinfo{year}{2019}\natexlab{}.
\newblock \showarticletitle{Computational optimal transport: With applications to data science}.
\newblock \bibinfo{journal}{\emph{Foundations and Trends{\textregistered} in Machine Learning}} \bibinfo{volume}{11}, \bibinfo{number}{5-6} (\bibinfo{year}{2019}), \bibinfo{pages}{355--607}.
\newblock


\bibitem[Pham et~al\mbox{.}(2020)]%
        {pham2020unbalanced}
\bibfield{author}{\bibinfo{person}{Khiem Pham}, \bibinfo{person}{Khang Le}, \bibinfo{person}{Nhat Ho}, \bibinfo{person}{Tung Pham}, {and} \bibinfo{person}{Hung Bui}.} \bibinfo{year}{2020}\natexlab{}.
\newblock \showarticletitle{On unbalanced optimal transport: An analysis of sinkhorn algorithm}. In \bibinfo{booktitle}{\emph{International Conference on Machine Learning}}. PMLR, \bibinfo{pages}{7673--7682}.
\newblock


\bibitem[Poulson and Kaplan(2008)]%
        {poulson2008state}
\bibfield{author}{\bibinfo{person}{Barry~W Poulson} {and} \bibinfo{person}{Jules~Gordon Kaplan}.} \bibinfo{year}{2008}\natexlab{}.
\newblock \showarticletitle{State income taxes and economic growth}.
\newblock \bibinfo{journal}{\emph{Cato J.}}  \bibinfo{volume}{28} (\bibinfo{year}{2008}), \bibinfo{pages}{53}.
\newblock


\bibitem[Rendle et~al\mbox{.}(2012)]%
        {BPR}
\bibfield{author}{\bibinfo{person}{Steffen Rendle}, \bibinfo{person}{Christoph Freudenthaler}, \bibinfo{person}{Zeno Gantner}, {and} \bibinfo{person}{Lars Schmidt-Thieme}.} \bibinfo{year}{2012}\natexlab{}.
\newblock \showarticletitle{BPR: Bayesian personalized ranking from implicit feedback}.
\newblock \bibinfo{journal}{\emph{arXiv preprint arXiv:1205.2618}} (\bibinfo{year}{2012}).
\newblock


\bibitem[Salkin and De~Kluyver(1975)]%
        {salkin1975knapsack}
\bibfield{author}{\bibinfo{person}{Harvey~M Salkin} {and} \bibinfo{person}{Cornelis~A De~Kluyver}.} \bibinfo{year}{1975}\natexlab{}.
\newblock \showarticletitle{The knapsack problem: a survey}.
\newblock \bibinfo{journal}{\emph{Naval Research Logistics Quarterly}} \bibinfo{volume}{22}, \bibinfo{number}{1} (\bibinfo{year}{1975}), \bibinfo{pages}{127--144}.
\newblock


\bibitem[Singh and Joachims(2019)]%
        {singh2019policy}
\bibfield{author}{\bibinfo{person}{Ashudeep Singh} {and} \bibinfo{person}{Thorsten Joachims}.} \bibinfo{year}{2019}\natexlab{}.
\newblock \showarticletitle{Policy learning for fairness in ranking}.
\newblock \bibinfo{journal}{\emph{Advances in neural information processing systems}}  \bibinfo{volume}{32} (\bibinfo{year}{2019}).
\newblock


\bibitem[Swanson et~al\mbox{.}(2020)]%
        {swanson2020rationalizing}
\bibfield{author}{\bibinfo{person}{Kyle Swanson}, \bibinfo{person}{Lili Yu}, {and} \bibinfo{person}{Tao Lei}.} \bibinfo{year}{2020}\natexlab{}.
\newblock \showarticletitle{Rationalizing Text Matching: Learning Sparse Alignments via Optimal Transport}. In \bibinfo{booktitle}{\emph{Proceedings of the 58th Annual Meeting of the Association for Computational Linguistics}}. \bibinfo{pages}{5609--5626}.
\newblock


\bibitem[Tang et~al\mbox{.}(2023)]%
        {Jiakai}
\bibfield{author}{\bibinfo{person}{Jiakai Tang}, \bibinfo{person}{Shiqi Shen}, \bibinfo{person}{Zhipeng Wang}, \bibinfo{person}{Zhi Gong}, \bibinfo{person}{Jingsen Zhang}, {and} \bibinfo{person}{Xu Chen}.} \bibinfo{year}{2023}\natexlab{}.
\newblock \showarticletitle{When Fairness Meets Bias: A Debiased Framework for Fairness Aware Top-N Recommendation}. In \bibinfo{booktitle}{\emph{Proceedings of the 17th ACM Conference on Recommender Systems}} (Singapore, Singapore) \emph{(\bibinfo{series}{RecSys '23})}. \bibinfo{publisher}{Association for Computing Machinery}, \bibinfo{address}{New York, NY, USA}, \bibinfo{pages}{200–210}.
\newblock
\showISBNx{9798400702419}
\urldef\tempurl%
\url{https://doi.org/10.1145/3604915.3608770}
\showDOI{\tempurl}


\bibitem[Tyler and Allan~Lind(2002)]%
        {tyler2002procedural}
\bibfield{author}{\bibinfo{person}{Tom~R Tyler} {and} \bibinfo{person}{E Allan~Lind}.} \bibinfo{year}{2002}\natexlab{}.
\newblock \showarticletitle{Procedural justice}.
\newblock In \bibinfo{booktitle}{\emph{Handbook of justice research in law}}. \bibinfo{publisher}{Springer}, \bibinfo{pages}{65--92}.
\newblock


\bibitem[Tyler and Smith(1995)]%
        {tyler1995social}
\bibfield{author}{\bibinfo{person}{Tom~R Tyler} {and} \bibinfo{person}{Heather~J Smith}.} \bibinfo{year}{1995}\natexlab{}.
\newblock \showarticletitle{Social justice and social movements}.
\newblock  (\bibinfo{year}{1995}).
\newblock


\bibitem[Wang et~al\mbox{.}(2022)]%
        {Wang22makefair}
\bibfield{author}{\bibinfo{person}{Jiayin Wang}, \bibinfo{person}{Weizhi Ma}, \bibinfo{person}{Jiayu Li}, \bibinfo{person}{Hongyu Lu}, \bibinfo{person}{Min Zhang}, \bibinfo{person}{Biao Li}, \bibinfo{person}{Yiqun Liu}, \bibinfo{person}{Peng Jiang}, {and} \bibinfo{person}{Shaoping Ma}.} \bibinfo{year}{2022}\natexlab{}.
\newblock \showarticletitle{Make Fairness More Fair: Fair Item Utility Estimation and Exposure Re-Distribution}. In \bibinfo{booktitle}{\emph{Proceedings of the 28th ACM SIGKDD Conference on Knowledge Discovery and Data Mining}} (Washington DC, USA) \emph{(\bibinfo{series}{KDD '22})}. \bibinfo{publisher}{Association for Computing Machinery}, \bibinfo{address}{New York, NY, USA}, \bibinfo{pages}{1868–1877}.
\newblock
\showISBNx{9781450393850}
\urldef\tempurl%
\url{https://doi.org/10.1145/3534678.3539354}
\showDOI{\tempurl}


\bibitem[Wu et~al\mbox{.}(2021)]%
        {wu2021tfrom}
\bibfield{author}{\bibinfo{person}{Yao Wu}, \bibinfo{person}{Jian Cao}, \bibinfo{person}{Guandong Xu}, {and} \bibinfo{person}{Yudong Tan}.} \bibinfo{year}{2021}\natexlab{}.
\newblock \showarticletitle{Tfrom: A two-sided fairness-aware recommendation model for both customers and providers}. In \bibinfo{booktitle}{\emph{Proceedings of the 44th International ACM SIGIR Conference on Research and Development in Information Retrieval}}. \bibinfo{pages}{1013--1022}.
\newblock


\bibitem[Xu et~al\mbox{.}(2023a)]%
        {xu2023p}
\bibfield{author}{\bibinfo{person}{Chen Xu}, \bibinfo{person}{Sirui Chen}, \bibinfo{person}{Jun Xu}, \bibinfo{person}{Weiran Shen}, \bibinfo{person}{Xiao Zhang}, \bibinfo{person}{Gang Wang}, {and} \bibinfo{person}{Zhenhua Dong}.} \bibinfo{year}{2023}\natexlab{a}.
\newblock \showarticletitle{P-MMF: Provider Max-min Fairness Re-ranking in Recommender System}. In \bibinfo{booktitle}{\emph{Proceedings of the ACM Web Conference 2023}}. \bibinfo{pages}{3701--3711}.
\newblock


\bibitem[Xu et~al\mbox{.}(2023b)]%
        {xu2023ltp}
\bibfield{author}{\bibinfo{person}{Chen Xu}, \bibinfo{person}{Xiaopeng Ye}, \bibinfo{person}{Jun Xu}, \bibinfo{person}{Xiao Zhang}, \bibinfo{person}{Weiran Shen}, {and} \bibinfo{person}{Ji-Rong Wen}.} \bibinfo{year}{2023}\natexlab{b}.
\newblock \showarticletitle{LTP-MMF: Towards Long-term Provider Max-min Fairness Under Recommendation Feedback Loops}.
\newblock \bibinfo{journal}{\emph{arXiv preprint arXiv:2308.05902}} (\bibinfo{year}{2023}).
\newblock


\bibitem[Xu et~al\mbox{.}(2018)]%
        {IRbook}
\bibfield{author}{\bibinfo{person}{Jun Xu}, \bibinfo{person}{Xiangnan He}, {and} \bibinfo{person}{Hang Li}.} \bibinfo{year}{2018}\natexlab{}.
\newblock \showarticletitle{Deep Learning for Matching in Search and Recommendation}. In \bibinfo{booktitle}{\emph{The 41st International ACM SIGIR Conference on Research \& Development in Information Retrieval}} (Ann Arbor, MI, USA) \emph{(\bibinfo{series}{SIGIR '18})}. \bibinfo{publisher}{Association for Computing Machinery}, \bibinfo{address}{New York, NY, USA}, \bibinfo{pages}{1365–1368}.
\newblock
\showISBNx{9781450356572}


\bibitem[Yang et~al\mbox{.}(2019)]%
        {yang2019bid}
\bibfield{author}{\bibinfo{person}{Xun Yang}, \bibinfo{person}{Yasong Li}, \bibinfo{person}{Hao Wang}, \bibinfo{person}{Di Wu}, \bibinfo{person}{Qing Tan}, \bibinfo{person}{Jian Xu}, {and} \bibinfo{person}{Kun Gai}.} \bibinfo{year}{2019}\natexlab{}.
\newblock \showarticletitle{Bid optimization by multivariable control in display advertising}. In \bibinfo{booktitle}{\emph{Proceedings of the 25th ACM international conference on knowledge discovery \& data mining}}. \bibinfo{pages}{1966--1974}.
\newblock


\bibitem[Zafar et~al\mbox{.}(2019)]%
        {zafar2019fairness}
\bibfield{author}{\bibinfo{person}{Muhammad~Bilal Zafar}, \bibinfo{person}{Isabel Valera}, \bibinfo{person}{Manuel Gomez-Rodriguez}, {and} \bibinfo{person}{Krishna~P Gummadi}.} \bibinfo{year}{2019}\natexlab{}.
\newblock \showarticletitle{Fairness constraints: A flexible approach for fair classification}.
\newblock \bibinfo{journal}{\emph{The Journal of Machine Learning Research}} \bibinfo{volume}{20}, \bibinfo{number}{1} (\bibinfo{year}{2019}), \bibinfo{pages}{2737--2778}.
\newblock


\end{thebibliography}

\end{document}